\newtheorem{theorem}{Theorem}[section]
\newtheorem{lemma}[theorem]{Lemma}
\newtheorem{corollary}[theorem]{Corollary}
\newtheorem{definition}[lemma]{Definition}
\newtheorem{fact}[lemma]{Fact}
\newtheorem{claim}[lemma]{Claim}
\newtheorem{def-restatable}[theorem]{Definition}
\newtheorem{obs-restatable}{Lemma}
\newtheorem{lemma-restatable}[theorem]{Lemma}
\newtheorem{corollary-restatable}[theorem]{Corollary}
\newcommand{\Oh}{\mathcal{O}}
\newcommand{\OO}{O}
\newcommand{\NN}{\Gamma}
\newcommand{\toggle}{\textsf{toggle}}
\newcommand{\UndoLastMove}{UndoAndRemoveLastMoveFrom$S_1$}
\newcommand{\ToggleBlackNode}{ToggleBlackNodeFromVAndAppendTo$S_1$}
\newcommand{\Id}{Id}
\newcommand{\OV}{OV}
\newcommand{\FIGURE}[4]{
\begin{figure}[#1]
\begin{centering}
\includegraphics[width={#2}\textwidth]{{#3}.pdf}
\caption{#4}
\label{fig:#3}
\end{centering}
\end{figure}
}
\title{Sorting Signed Permutations by Reversals in Nearly-Linear Time\footnote{This work was partially funded by the grant ANR-20-CE48-0001 from the French National Research Agency (ANR).}}
\date{}
\author[1]{Bartłomiej Dudek}
\author[1]{Paweł Gawrychowski}
\author[2]{Tatiana Starikovskaya}
\affil[1]{Institute of Computer Science, University of Wrocław, Poland}
  \affil[\ ]{\texttt{\{bartlomiej.dudek,gawry\}@cs.uni.wroc.pl}}
\affil[2]{DIENS, \'{E}cole normale sup\'{e}rieure, PSL Research University, France}
\affil[\ ]{\texttt{tat.starikovskaya@gmail.com}}
\begin{document}
\maketitle

\begin{abstract}
Given a signed permutation on $n$ elements, we need to sort it with the fewest reversals.
This is a fundamental algorithmic problem motivated by applications in comparative genomics,
as it allows to accurately model rearrangements in small genomes.
The first polynomial-time algorithm was given in the foundational work of Hannenhalli and Pevzner [J. ACM'99].
Their approach was later streamlined and simplified by Kaplan, Shamir, and Tarjan [SIAM J. Comput.'99] and
their framework has eventually led to an algorithm that works in $\Oh(n^{3/2}\sqrt{\log n})$ time given by
Tannier, Bergeron, and Sagot [Discr. Appl. Math.'07]. However, the challenge of finding a nearly-linear time algorithm remained
unresolved. In this paper, we show how to leverage the results on dynamic graph connectivity
to obtain a surprisingly simple $\Oh(n \log^2 n / \log \log n)$ time algorithm for this problem.
\end{abstract}
%\newpage
%\pagenumbering{arabic}

\section{Introduction}

The main goal of comparative genomics is to design efficient methods for comparing genomes of different species.
A genome can be thought of as a linear sequence of genes, identified by numbers, with every gene having a direction,
or a sign. Already in the eighties, it has been observed that seemingly very different genomes might actually
have a small reversal distance, defined as the smallest number of reversals to transform one genome into another.
We refer to the comprehensive treatment of different aspects of such an approach in the book by Pevzner~\cite{Pevzner00}.
This brings the question of computing this number efficiently.

A clean combinatorial formulation of this problem was first formulated by Day and Sankoff~\cite{DayS87}.
We think that each genome is a permutation $\pi$ on
$\{1,2,\ldots,n\}$ where every element additionally has a sign $+$ or $-$. A reversal takes a segment
of such a $\pi$, reverses it and additionally flips the sign of its elements. Given a signed permutation
$\pi$ and $\pi'$, we want to find a shortest sequence of reversals that transforms $\pi$ into $\pi'$. The
length of such a sequence is called the reversal distance between $\pi$ and $\pi'$. By renaming the elements,
it is enough to consider sorting a given signed permutation $\pi$, that is, transforming it into $(1,2,\ldots,n)$.

The challenge of designing an efficient algorithm for sorting signed permutations by reversals has a long history.
Initially, only approximation algorithms were known~\cite{KececiogluS95,BafnaP96}.
In their famous JACM paper titled ``Transforming Cabbage into Turnip'', Hannenhalli and Pevzner~\cite{HannenhalliP99} were able
to design a polynomial time algorithm for this problem. This is somewhat surprising, as the unsigned version of the
problem is in fact NP-hard~\cite{Caprara97}. Their main contribution was a duality theorem that connects the
reversal distance with some combinatorial parameters, which results in a fairly complicated algorithm running
in $\Oh(n^{4})$ time. Very soon afterwards, Berman and Hannenhalli~\cite{HannenhalliP99} showed how to
improve its running time to $\Oh(n^{2}\alpha(n))$, however the underlying combinatorial theory was still very
complicated. Fortunately, Kaplan, Shamir, and Tarjan~\cite{KaplanST99} were able to considerably simplify
the combinatorial theory and the resulting algorithm, thus obtaining a faster quadratic time algorithm,
and Bergeron~\cite{Bergeron05} simplified it even further, obtaining a different algorithm with cubic running time.
However, no worst-case subquadratic time algorithm was known at this point. Ozery-Flato and Shamir~\cite{Ozery-FlatoS03}
designed a family of permutations on which any algorithm based on a similar approach needs at least quadratic time,
and explicitly raised the question of designing a subquadratic algorithm.

An empirical answer to this question was provided by Kaplan and Verbin~\cite{KaplanV05}, who designed an algorithm
based on a random walk in a certain graph. On a random input, their algorithm seems to find the correct shortest sequence
with high probability in $\Oh(n\sqrt{n\log n})$ time. More importantly, at the heart of their algorithm lies a
structure for maintaining a permutation under reversals in $\Oh(\sqrt{n\log n})$ time per
operation. An extension of this structure was used by Tannier, Bergeron, and Sagot~\cite{TannierBS07}
to design a worst-case $\Oh(n\sqrt{n\log n})$ time algorithm that works for any input.
Their algorithm departs in an interesting way from the previous approaches, which were based on repeatedly
applying the so-called safe reversals (reversals that decrease the distance by~$1$). Instead, they maintain
a sequence of reversals, and repeatedly increase its length by inserting (possibly multiple) reversals somewhere in the
middle (and not just at the end). Their algorithm crucially depend on the observation that the position in which we insert
the new reversals never moves to the right in the current sequence, which allows for an efficient implementation.
Finally, Han~\cite{Han06} showed how to tweak the structure of Kaplan and Verbin~\cite{KaplanV05} to
implement some of its operations in $\Oh(\sqrt{n})$ time. While not fully described in his paper, it is plausible 
that such an approach in fact works for speeding up all the operations.

After the subquadratic time algorithm of Tannier, Bergeron, and Sagot~\cite{TannierBS07},
the next challenge is understanding if the complexity of sorting signed permutations by reversals is $\tilde\Oh(n^{1.5})$,
or perhaps is there a near-linear time algorithm?
On a more applied side, Swenson, Rajan, Lin, and Moret~\cite{SwensonRLM10} identified a data-dependent parameter $k$,
observed that it appears to be a small constant for a random input, and designed an $\Oh(n\log n+kn)$ time algorithm.
However, no worst-case near-linear time algorithm was known.
Interestingly, computing the reversal distance itself can be done in linear time~\cite{BermanH96,BaderMY01}.
However, constructing the corresponding sequence seems more challenging.

\paragraph{Our result. }
We design a near-linear time for sorting a signed permutation with reversals. More specifically, our
algorithm outputs a shortest sequence of reversals sorting a given signed permutation on $n$ elements
in near-linear time.
First, we show how to combine the algorithm of Tannier, Bergeron, and Sagot~\cite{TannierBS07} with
a structure for maintaining a graph under inserting/deleting edges and maintaining a minimum spanning tree.
This immediately gives us an $O(n\log^{4}n)$ time algorithm by plugging in e.g. the structure of Holm et al.~\cite{HolmLT01}.
Next, we show that in fact maintaining any spanning forest is sufficient, which allows for a faster $\Oh(n\log^{2}n)$
time algorithm~\cite{HolmLT01}, or $\Oh(n\log^{2}n/\log\log n)$ by plugging in the fastest deterministic structure~\cite{WulffNielsen}
and some careful bookkeeping.

\paragraph{Our method. }
Our algorithm builds on the work of Tannier, Bergeron, and Sagot~\cite{TannierBS07}. The high-level description of their algorithm
is as follows. They maintain a sequence $S$ of reversals, partitioned into $S_{1}$ and $S_{2}$.
$S_{2}$ will be the suffix of the final sequence, but there might be some further reversals inserted into $S_{1}$.
They maintain the current signed permutation obtained by applying all the reversals in $S_{1}$ to the initial permutation.
They also maintain the set of remaining reversals $V$ that can create a new adjacency, that is,
bring elements $i$ and $i+1$ to the adjacent positions. Then, as long as there exists a reversal in $V$ such that its endpoints
contain elements with different signs in the current permutation, such a reversal is applied, removed from $V$, and appended to $S_{1}$
Otherwise, the last reversal of~$S_{1}$ is undone and prepended to $S_{2}$. The main difficulty
is to detect a reversal in $V$ such that its endpoints contain elements with different signs, given that we keep reversing
different segments of the permutation (recall that this involved flipping the sign of every element in the segment).
Instead of designing a new structure for this problem, we reduce it to the well-known fully dynamic connectivity problem.
More specifically, our graph consists of two paths on nodes $\{0,1,\ldots,n+1\}$ and $\{0',1',\ldots,(n+1)'\}$. We think
that the paths consist of blue edges. Throughout the execution of the algorithm, we maintain the two paths,
and for each $i=0,1,\ldots,(n+1)'$ the node $i$ is on one path while the node $i'$ is on the other path.
The order on the nodes on each path corresponds to the current permutation (disregarding the signs), and the path
starting at node $0$ contains the node $i$ if and only if the element $i$ of the permutation has been flipped an even number
of times. It is easy to see how to maintain such invariants by changing 4 edges per reversal in the permutation.
Next, we also encode every reversal in $V$ as two red edges in the graph. This is done in such a way that
the endpoints of a reversal in $V$ contain elements with different signs if and only if its corresponding red edges
connect different blue paths in the current graph. This does not require any modifications to the graph after a reversal,
except that we need to remove its corresponding red edges when removing a reversal from $V$.
We maintain a fully dynamic connectivity structure for the graph; we stress that, even though we have edges of two colours,
we simply maintain all of them in the structure, disregarding the colours. Now, checking if there exists a reversal in~$V$ such that its
endpoints contain elements with different signs reduces to checking if the whole graph is connected.
Extracting such a reversal requires a bit more work, as we have edges of two colors in our graph, and we need
to find a red edge in the current spanning tree. We show how to implement this efficiently by binary searching
over the unique path connecting nodes $0$ and $0'$ in the spanning tree.

\section{Preliminaries}
In this section, we describe the previous framework and summarise it as a concise interface on signed permutations in Theorem~\ref{thm:interface_on_permutation}. Our result follows from an efficient implementation of this interface presented in Section~\ref{sec:faster}. 

We are mostly using the naming convention and definitions from \cite{TannierBS07}.
A signed permutation $\pi$ on $n$ elements is a sequence where each element of $\{1,2,\ldots,n\}$ appears exactly once and has a sign, $+$ (often omitted) or $-$. By $\pi_i$, we denote the $i$th element of the sequence and by $\pi_i^{-1}$ an index $j$ such that $\pi_j = \pm i$. In the problem of sorting by reversals, we are given a signed permutation~$\pi$ and must find a shortest sequence of reversals transforming $\pi$ into the identity permutation $\Id=(1,2,\ldots,|\pi|)$. A reversal $\rho$ of an interval $[i,j]$ transforms a signed permutation  $\pi = (\pi_1,\pi_2,\ldots,\pi_n)$ into a signed permutation $\pi \cdot \rho=(\pi_1,\ldots,\pi_{i-1},-\pi_j,-\pi_{j-1},\ldots,-\pi_i,\pi_{j+1},\ldots,\pi_n)$. 
The length of a shortest sequence of reversals sorting $\pi$ is denoted by $d(\pi)$.

Consider a signed permutation $\pi$ on $n$ elements. 
Following the convention, we extend the permutation $\pi$ by $+0$ at the beginning and $+(n+1)$ at the end, that is hereafter we consider only permutations $\pi=(+0,\pi_1,\ldots,\pi_n,+(n+1))$. We will never involve $0$ or $n+1$ in a reversal and therefore their signs will never change. 

Recall the definition of the overlap graph of a signed permutation $\pi$, $\OV(\pi)$, introduced by Kaplan, Shamir, and Tarjan~\cite{KaplanST99}. See Figure~\ref{fig:example0}. We associate two points $\pi_i^-$ and $\pi_i^+$ to every $i\in[n]$, and also $0^+$ to $0$ and $(n+1)^-$ to $(n+1)$. These points are linearly ordered in such a way that $\pi_i^-\prec\pi_i^+$ if $\pi_i\geq0$ and $\pi_i^+\prec\pi_i^-$ otherwise, and $\pi_i^x \prec \pi_j^y$ whenever $i<j$ for any combination of $x,y\in\{-,+\}$. We then add $n+1$ arcs in such a way that every point is an endpoint of one arc, namely, an arc $v_i$ connects $i^+$ and $(i+1)^-$. The \emph{overlap graph} $OV(\pi)$ of a permutation $\pi$ is a graph in which the nodes correspond to the $n+1$ arcs $v_i$, and the nodes corresponding to two arcs are connected by an edge if the intervals that the arcs span intersect, but none is contained in the other. A node $v_i$ is \emph{black} if $i$ and $(i+1)$  have different signs in $\pi$ and \emph{white} otherwise. 

\FIGURE{h}{1}{example0}{Left: permutation $\pi=(0,-3,-1,4,5,2,6)$ with its associated points $\pi^+_i$ and $\pi^-_i$ and arcs~$v_i$. Right: the overlap graph $OV(\pi)$. }

\begin{fact}[{\cite{TannierBS07}}]\label{obs:adjacencies}
 The isolated nodes of $OV(\pi)$ are white and correspond to elements on positions $j$ such that $\pi_j+1=\pi_{j+1}$.
 Such a pair of positions $j,j+1$ is called an \emph{adjacency}.
\end{fact}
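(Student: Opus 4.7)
The plan is to prove the biconditional: a node $v_i$ of $\OV(\pi)$ is isolated if and only if the two endpoints $i^+$ and $(i+1)^-$ of its arc are consecutive in the linear order on points. From this geometric statement one reads off both the adjacency condition $\pi_j+1=\pi_{j+1}$ and the whiteness of $v_i$.

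I would start with a single elementary observation: for each value $v\in[n]$, the two points $v^+$ and $v^-$ are adjacent in the linear order, since both sit at position $\pi_v^{-1}$ of $\pi$, with their relative order determined by the sign of $v$ there. Now suppose $v_i$ is isolated and let $I$ be the open interval delimited by $i^+$ and $(i+1)^-$ in the linear order. Isolation means no arc $v_j$ with $j\neq i$ overlaps $v_i$, equivalently, for every such $j$ the endpoints $j^+$ and $(j+1)^-$ of $v_j$ lie both in $I$ or both outside. Combined with the observation (giving the same conclusion for $v^+,v^-$ whenever $v\notin\{i,i+1\}$), these constraints live on consecutive pairs of the sequence $0^+, 1^-, 1^+, 2^-, 2^+, \ldots, n^-, n^+, (n+1)^-$ of all $2n+2$ points listed by value: each consecutive pair is either an arc $(j^+,(j+1)^-)$, tied unless $j=i$, or a same-value pair $(v^-,v^+)$, tied unless $v\in\{i,i+1\}$. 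Exactly three consecutive pairs are not tied, those surrounding $i^+$ and $(i+1)^-$; removing them splits the sequence into segments, two of which are the singletons $\{i^+\}$ and $\{(i+1)^-\}$ forming the boundary of $I$. Each remaining segment contains $0^+$ or $(n+1)^-$, which sit at positions $0$ and $n+1$ of the extended permutation and hence outside $I$, so the whole segment is outside $I$. Therefore no point lies strictly inside $I$, and $i^+, (i+1)^-$ are consecutive in the linear order.

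Once $I$ is known to be empty, a short sign-and-position analysis closes the argument. Let $p_1=\pi_i^{-1}$ and $p_2=\pi_{i+1}^{-1}$; consecutiveness of $i^+$ and $(i+1)^-$ forces $|p_2-p_1|=1$ and pins down the signs. If $p_1<p_2$, we must have $\pi_{p_1}=+i$ and $\pi_{p_1+1}=+(i+1)$ (so that $i^+$ is the right point at position $p_1$ and $(i+1)^-$ is the left point at $p_2=p_1+1$); if $p_1>p_2$, symmetrically $\pi_{p_2}=-(i+1)$ and $\pi_{p_2+1}=-i$. In each case $\pi_j+1=\pi_{j+1}$ holds at $j=\min(p_1,p_2)$ and $i,i+1$ share a sign, so $v_i$ is white. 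The converse direction is a direct verification: from either form of adjacency, sign considerations make $i^+$ and $(i+1)^-$ consecutive in the linear order, so $v_i$'s interval is empty and $v_i$ is isolated. The main obstacle is the path argument in the forward direction; once one sees the sequence of points as a path whose constraints cover all but three specific edges, the rest is just unpacking definitions. Boundary cases $i\in\{0,n\}$ are handled identically, with one of the three untied pairs simply absent because values $0$ and $n+1$ lack one of their two points.
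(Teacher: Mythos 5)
Your proof is correct. Note that the paper itself offers no proof of this fact---it is imported wholesale from Tannier, Bergeron, and Sagot---so there is no internal argument to compare against, and I can only assess the proposal on its own merits.

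The argument is sound and, pleasantly, a bit slicker than the usual case analysis. The forward direction is the real content: you correctly characterize isolation of $v_i$ as the condition that for every $j\neq i$ both endpoints of $v_j$ lie on the same side of the open interval $I$ delimited by $i^+$ and $(i+1)^-$ (disjointness gives both outside, and either containment gives both inside or both outside), and you correctly observe that for $v\notin\{i,i+1\}$ the pair $v^-,v^+$ occupies a single position and so is likewise confined to one side. Viewing the $2n+2$ points in value order as a path whose edges alternate between arcs $(j^+,(j+1)^-)$ and same-value pairs $(v^-,v^+)$, exactly the three edges touching $i^+$ or $(i+1)^-$ carry no constraint; deleting them leaves four components, two of which are the singleton boundary points of $I$ and the other two are anchored at $0^+$ and $(n+1)^-$, which sit at the extreme positions $0$ and $n+1$ and hence outside $I$. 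Emptiness of $I$ follows. The closing sign analysis in the two cases $p_1<p_2$ and $p_1>p_2$, the derivation of $\pi_j+1=\pi_{j+1}$ at $j=\min(p_1,p_2)$ and of whiteness, and the treatment of the boundary values $i\in\{0,n\}$ are all correct. One small terminological quibble: you write ``no arc $v_j$ $\ldots$ overlaps $v_i$,'' but the overlap-graph adjacency relation is ``overlap without containment''; your very next clause states the intended (correct) condition, so this is a phrasing issue only. The converse direction is, as you say, immediate.
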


A reversal on a permutation corresponds to a natural transformation of its overlap graph, later referred to as \emph{toggle}. Toggles are defined for general graphs where each node is either black or white, not necessarily overlap graphs. Fix such a graph $G$. Let $\NN(v)$ be the neighborhood of a node $v$ of $G$ and $\NN^+(v) = \NN(v) \cup \{v\}$. Define $\toggle(v)$ as a local complementation of $\NN^+(v)$, that is negating the color of every node and complementing the edges in the subgraph of $G$ induced by $\NN^+(v)$ (adding an edge where there was no edge and removing all existing edges). 
The operation is only allowed for black nodes $v$.
Note that after this operation, $v$ always becomes white and isolated.
By $G/v$ we denote the graph obtained from $G$ after applying $\toggle(v)$. We naturally extend this notation to a sequence $S=(s_1,\ldots,s_k)$ of nodes of $G$, namely $G/S:=G/s_1/\ldots/s_k$. 

\FIGURE{h}{1}{example1}{Left: permutation $\pi=(0,-3,-1,4,5,2,6)$ from Figure~\ref{fig:example0} after reversal $\rho(v_2)$. Right: the overlap graph $\OV(\pi)/v_2 = \OV(\pi\cdot \rho(v_2))$, see Lemma~\ref{le:toggle_is_reversal}.}

Let $v$ be a black node of $\OV(\pi)$. Denote by $\rho(v)$ a reversal of an interval consisting of the occurrences of all elements $k$ such that both $k^+$ and $k^-$ are spanned by the arc $v$, including its endpoints. Formally, if $v = v_i$, let $a=\min\{\pi^{-1}_i,\pi^{-1}_{i+1}\}$ and $b=\max\{\pi^{-1}_i,\pi^{-1}_{i+1}\}$. If $\pi_a+\pi_b=1$, then $\rho(v)$ is a reversal of $[a,b-1]$, and of $[a+1,b]$ otherwise. See Figure~\ref{fig:example1}.

\begin{lemma}[\cite{TannierBS07,HannenhalliP99,KaplanST99}]\label{le:toggle_is_reversal}
For a permutation $\pi$ and a black node $v$ of $OV(\pi)$, we have that $\OV(\pi)/v=\OV(\pi\cdot\rho(v))$.
\end{lemma}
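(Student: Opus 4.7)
The plan is to check that $\OV(\pi)/v$ and $\OV(\pi\cdot\rho(v))$ agree as colored graphs on the common vertex set $\{v_0,\ldots,v_n\}$, which reduces to comparing colors node by node and edges pair by pair. I would first handle $v$ itself. Under $\toggle(v)$, the node $v$ is recolored white and loses all incident edges by the definition of local complementation. Under $\rho(v)$, a short case analysis on $\pi_a,\pi_b$ shows that the reversal brings the values $i$ and $i+1$ to consecutive positions with matching signs, forming an adjacency; Fact~\ref{obs:adjacencies} then yields that $v$ is white and isolated in $\OV(\pi\cdot\rho(v))$.

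The next step is the combinatorial characterization that for $u=v_j\neq v$, we have $v_j\in \NN(v)$ in $\OV(\pi)$ if and only if exactly one of the values $j,j+1$ occupies a position inside the reversed interval $I$. This follows by unpacking the overlap-without-containment definition applied to $v_j$ and $v$, using that $I$ is by construction the set of positions whose $+$ and $-$ points both lie within the closed span of $v$. Given this characterization, the colors match on every node: both operations flip the color of $v_j$ precisely when an odd number (i.e.\ exactly one) of $j,j+1$ has its sign toggled, equivalently when $v_j\in \NN(v)$.

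For the edges between nodes $v_{j_1},v_{j_2}\neq v$, I would split into cases according to whether each arc lies in $\NN(v)$. The toggle complements an edge iff both endpoints lie in $\NN(v)$, so I need to show: if at most one of the two arcs lies in $\NN(v)$, the reversal preserves their overlap relation, and if both do, the reversal flips it. The first part is a direct check: when both arcs are entirely inside $I$, or both entirely outside $I$, or one entirely inside and the other entirely outside, the action of $\rho(v)$ on the relevant endpoints either does nothing or uniformly reflects them inside $I$, and in each subcase overlap, nestedness, and disjointness are all preserved. The second part is the heart of the argument: when both arcs have exactly one endpoint inside $I$ and one outside, reversing $I$ swaps the relative order of their inside-$I$ endpoints, which exchanges ``overlap'' with either ``nested'' or ``disjoint'' depending on which side of $I$ their outside endpoints lie.

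The main obstacle is this final edge check in the two-neighbors case, together with the boundary bookkeeping. Arcs incident to one of the values $i$ or $i+1$ sit at the boundary of $I$, and handling them correctly requires careful use of the rule distinguishing $[a,b-1]$ from $[a+1,b]$ via the condition $\pi_a+\pi_b=1$, together with tracking the positions of the $\pm$ points at $a$ and $b$ relative to the endpoints $i^+$ and $(i+1)^-$ of $v$. Once notation is fixed and the finitely many sign patterns are enumerated, the verification becomes mechanical.
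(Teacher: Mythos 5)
The paper does not prove this lemma; it is stated as a known result with citations to Tannier et al., Hannenhalli--Pevzner, and Kaplan--Shamir--Tarjan, so there is no paper proof to compare against. Your plan is the standard direct verification, and the key intermediate claim you isolate is correct: given the rule distinguishing $[a,b-1]$ from $[a+1,b]$ via $\pi_a+\pi_b=1$, the reversed interval $I$ is exactly the set of positions whose $\pm$ points lie in the closed span of $v$, and (for $j\ne i$) an endpoint of $v_j$ falls in that closed span precisely when the corresponding element of $\{j,j+1\}$ occupies a position in $I$. From this the node-color check follows as you describe, and the isolated-and-white status of $v$ itself follows from Fact~\ref{obs:adjacencies}.

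There is, however, a gap in your edge case analysis. You need to show that the overlap relation between $v_{j_1}$ and $v_{j_2}$ is preserved whenever at most one of them lies in $\NN(v)$, but your enumeration -- (i) both arcs entirely inside $I$, (ii) both entirely outside, (iii) one entirely in and the other entirely out -- only covers the situation in which \emph{neither} arc is in $\NN(v)$. You omit the sub-cases where exactly one arc is in $\NN(v)$ (so it straddles the span of $v$) and the other is fully inside or fully outside. These are not handled by the "does nothing or uniformly reflects" argument, because only one of the four relevant endpoints is being moved; one must check separately that reflecting the single inside endpoint of a straddling arc within the span cannot change whether the other arc's endpoints fall in or out of the straddling arc's interval. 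The conclusion in these sub-cases does hold, but it needs its own short argument rather than being subsumed under (i)--(iii). The "both in $\NN(v)$" analysis you sketch is correct in spirit, including the observation that the flip trades overlap for nesting or for disjointness depending on the sides of the outside endpoints.
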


Lemma~\ref{le:toggle_is_reversal} implies that it suffices to find a shortest sequence of toggles that makes all nodes of $\OV(\pi)$ white and isolated (extending $\pi$ by $0$ and $n+1$ at the ends guarantees that the resulting graph is $\OV(\Id)$). 

A difficulty on the way to computing such a sequence is that $\OV(\pi)$ might have a non-singleton connected component containing only white nodes (``all-white connected component'') that we will not be able to change by toggling. Fortunately, there is a way to overcome this hurdle:

\begin{theorem}[\cite{BermanH96,KaplanST99,BaderMY01}]\label{th:initial_treatment}
Given a signed permutation $\pi$, it is possible to find in $\Oh(n)$ time a sequence of~$t$ reversals such that when we apply them to $\pi$ we obtain a permutation $\pi'$ for which $OV(\pi')$ does not have non-singleton all-white connected components and $d(\pi')=d(\pi)-t$.
\end{theorem}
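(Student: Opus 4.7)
The plan is to apply the classical Hannenhalli--Pevzner duality, which decomposes $d(\pi)$ as $(n+1)-c(\pi)+h(\pi)+f(\pi)$, where $c(\pi)$ counts cycles of the breakpoint graph (read off $OV(\pi)$ appropriately), $h(\pi)$ counts the non-singleton all-white connected components of $OV(\pi)$ (``hurdles'' in HP terminology), and $f(\pi)\in\{0,1\}$ is the ``fortress'' indicator. The sequence of reversals we output will be a sequence of standard HP hurdle-clearing reversals. Each such reversal is chosen so that $c$ is preserved while $h+f$ drops by exactly $1$; after $t=h(\pi)+f(\pi)$ of them we reach a permutation $\pi'$ with $h(\pi')=f(\pi')=0$, so $d(\pi')=(n+1)-c(\pi)=d(\pi)-t$ as required, and $OV(\pi')$ contains no non-singleton all-white component.

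To determine the reversals, I would first invoke the linear-time algorithm of Bader, Moret, and Yannakakis~\cite{BaderMY01}, which, although stated as a distance computation, produces along the way enough structural information to identify all hurdles, classify each as simple or super, and decide whether $\pi$ is a fortress, all in $\Oh(n)$ time. Given this classification, the list of reversals is assembled by the rules of Berman and Hannenhalli~\cite{BermanH96} as streamlined by Kaplan, Shamir, and Tarjan~\cite{KaplanST99}: if $\pi$ is a fortress, begin by cutting a super-hurdle (reducing $f$ to $0$ while keeping $h$ unchanged); then, while at least two hurdles remain, merge two suitably chosen hurdles with a single reversal so that no new hurdle is created (always possible when we are not in a fortress configuration); if exactly one hurdle is left at the end, cut it. Each reversal is specified by two positions of the current permutation, read off from the delimiting arcs of the hurdle(s) involved, and the whole list can be produced in $\Oh(n)$ time from the BMY output.

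The main obstacle is to realize this sequence of up to $\Theta(n)$ reversals on $\pi$ within $\Oh(n)$ total time, since a single reversal may touch $\Theta(n)$ entries. I would handle this by exploiting the structure of the hurdle tree: the supports of hurdles in the permutation form a laminar family, and each clearing reversal is a very constrained operation whose affected interval is tied to one hurdle or to two consecutive sibling hurdles in this tree. A charging argument then shows that, summed over the whole clearing process, each position of $\pi$ is rewritten only $\Oh(1)$ times, so performing the reversals in place (reversing entry order and flipping signs on the relevant intervals) costs $\Oh(n)$ in total. Combined with the $\Oh(n)$ preprocessing of~\cite{BaderMY01} this yields the required running time, and the equality $d(\pi')=d(\pi)-t$ follows from the HP duality recalled in the first paragraph.
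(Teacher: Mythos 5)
The paper does not prove this theorem---it is cited as a black box from Berman--Hannenhalli, Kaplan--Shamir--Tarjan, and Bader--Moret--Yannakakis---so there is no in-paper argument to compare against. Your reconstruction follows the right references and the right overall plan (Hannenhalli--Pevzner duality plus linear-time hurdle detection plus hurdle cutting/merging), but the bookkeeping in the first paragraph is off and the third paragraph addresses a non-issue.

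Two arithmetic claims are incorrect. A hurdle \emph{merge} does not preserve $c$: it acts on breakpoints from two different cycles of the breakpoint graph, so $c$ drops by $1$ while $h$ drops by $2$; only a hurdle \emph{cut} preserves $c$ while lowering $h$ by $1$. Consequently $h+f$ does not decrease by exactly $1$ per clearing reversal, and the total number of clearing reversals is not $h(\pi)+f(\pi)$ but roughly $\lceil h(\pi)/2\rceil$ (plus one for the fortress cut). As it happens your two errors cancel in the final computation of $d(\pi')$, so the conclusion $d(\pi')=d(\pi)-t$ is still true---but the correct and much cleaner way to see it is that Hannenhalli and Pevzner prove each such clearing reversal is \emph{safe}, i.e.\ decreases $d$ by exactly $1$; iterating $t$ times then immediately gives $d(\pi')=d(\pi)-t$. (This is also the viewpoint the paper itself adopts for the main loop via Theorem~\ref{thm:safe_decreases_d}.) Relatedly, $h(\pi)$ counts hurdles, which is a proper subset of the non-singleton all-white components (unoriented components) in general; the relevant fact is that no hurdles implies no unoriented components at all, which is what makes $OV(\pi')$ free of non-singleton all-white components.

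The third paragraph is misdirected. The theorem only asks to \emph{find} the sequence in $\Oh(n)$ time, and all clearing reversals can be read off the hurdle forest of the original $\pi$ in $\Oh(n)$ time without materialising intermediate permutations. In the paper's use of the theorem (proof of Theorem~\ref{thm:interface_on_permutation}), the reversals are subsequently applied through the balanced-BST interface of Fact~\ref{fact:bst_for_maintaining_pi}, at $\Oh(\log n)$ each, which is well within budget, so no $\Oh(n)$-total-write charging argument is needed---and the one you sketch (each position rewritten $\Oh(1)$ times) is not obviously true and would need a real proof if it were required.
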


%We stress that the overlap graph $OV(\pi)$ might have $\Theta(n^2)$ edges, so we cannot generate it explicitly and hence decomposition of $OV(\pi)$ into its connected components is non-trivial. Berman and Hannenhalli~\cite{BermanH96} and Bader et al.~\cite{BaderMY01} showed how to decompose so-called interleaving graph $H_\pi$ in $\Oh(n\alpha(n))$ and $\Oh(n)$ time respectively. $H_\pi$ is constructed in a similar way as $OV(\pi)$, but slightly differently and we do not need to provide the exact definition here. However, Kaplan et al. \cite{KaplanST99} showed that the partition into connected components of $OV(\pi)$ is identical to the one imposed by the connected components of $H_\pi$, so we can use the algorithm of Bader et al.~\cite{BaderMY01} to decompose $OV(\pi)$ into connected components.

To compute a shortest sequence of toggles efficiently, we will need one more definition:

\begin{definition}[\cite{TannierBS07}]
A black node $v$ of a graph $G$ is called \emph{safe} if there is no non-singleton all-white connected component in $G/v$.
\end{definition}

Hannenhalli and Pevzner~\cite{HannenhalliP99} showed that instead of looking for a shortest sequence of toggles making all nodes in $\OV(\pi)$ white and isolated, it suffices to find any sequence of toggles of safe nodes: 

\begin{theorem}[\cite{TannierBS07,HannenhalliP99}]\label{thm:safe_decreases_d}
 If $v$ is a safe node of $\OV(\pi)$, then $d(\pi\cdot\rho(v))=d(\pi)-1$.
\end{theorem}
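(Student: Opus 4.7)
The plan is to establish both $d(\pi \cdot \rho(v)) \ge d(\pi) - 1$ and $d(\pi \cdot \rho(v)) \le d(\pi) - 1$, with the first bound essentially free and the second relying on Hannenhalli--Pevzner duality.

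For the lower bound, prepending $\rho(v)$ to an optimal sorting sequence for $\pi \cdot \rho(v)$ of length $d(\pi \cdot \rho(v))$ produces a sorting sequence for $\pi$ of length one more, which gives $d(\pi) \le d(\pi \cdot \rho(v)) + 1$.

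For the matching upper bound, I would invoke the HP duality theorem in the form $d(\pi) = N(\pi) + H(\pi) + F(\pi)$, where $N(\pi) = n + 1 - c(\pi)$ is a cycle defect (with $c$ the number of cycles in the breakpoint graph of $\pi$), $H(\pi)$ counts hurdles, and $F(\pi)\in\{0,1\}$ is the fortress indicator. Two ingredients drive the proof. First, by Lemma~\ref{le:toggle_is_reversal} combined with the translation between overlap and breakpoint graphs worked out in \cite{KaplanST99}, a toggle on a black node always corresponds to a \emph{proper} reversal, which splits a single cycle into two; hence $N(\pi \cdot \rho(v)) = N(\pi) - 1$. Second, under the Kaplan--Shamir--Tarjan dictionary, hurdles correspond to non-singleton all-white connected components of the overlap graph of a specific kind, so the safety hypothesis -- namely that $OV(\pi)/v$ has \emph{no} non-singleton all-white components at all -- rules out the creation of new hurdles after the toggle, and a standard parity argument then handles the fortress indicator. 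Combined, these facts yield $H(\pi \cdot \rho(v)) + F(\pi \cdot \rho(v)) \le H(\pi) + F(\pi)$. Summing the two contributions gives $d(\pi \cdot \rho(v)) \le d(\pi) - 1$, matching the lower bound.

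The principal obstacle is the graph-theoretic characterization of hurdles and fortresses as certain all-white components of the overlap graph: this is the substantive content of the Kaplan--Shamir--Tarjan reformulation of the Hannenhalli--Pevzner theory, and it is exactly what makes the overlap-graph notion of ``safe'' the right one. Once this characterization is granted, together with the fact that toggling a black node is proper, the theorem reduces to the bookkeeping computation above.
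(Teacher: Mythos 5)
The paper does not supply its own proof of Theorem~\ref{thm:safe_decreases_d}; it is cited directly from \cite{TannierBS07,HannenhalliP99}, so there is no in-paper argument to compare against. Your sketch follows the standard route from the literature, namely Hannenhalli--Pevzner duality $d(\pi)=N(\pi)+H(\pi)+F(\pi)$ combined with the Kaplan--Shamir--Tarjan dictionary between breakpoint-graph notions and overlap-graph notions, and the lower-bound half is exactly right. Two remarks on the upper-bound half. First, you do not need a parity argument for the fortress term: a permutation whose overlap graph has no non-singleton all-white components has no unoriented components, hence no hurdles and a fortiori no fortress, so safety gives $H(\pi\cdot\rho(v))=F(\pi\cdot\rho(v))=0$ outright. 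Second, and more importantly, the inequality $H(\pi\cdot\rho(v))+F(\pi\cdot\rho(v))\le H(\pi)+F(\pi)$ as you state it would only yield $d(\pi\cdot\rho(v))\le d(\pi)-1$, and to obtain the claimed \emph{equality} you also need $H(\pi)=F(\pi)=0$ to start with. This does follow from the hypothesis, but it deserves to be said: a toggle at a black node $v$ only complements edges inside $\NN^{+}(v)$, so any non-singleton all-white component of $OV(\pi)$ (which cannot contain $v$) survives unchanged into $OV(\pi)/v$; hence safety of $v$ forces $OV(\pi)$ itself to already be hurdle- and fortress-free. With that observation added, the bookkeeping closes, assuming (as you acknowledge) the KST characterization and the fact that a reversal corresponding to an oriented node is proper.
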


Tannier et al.~\cite{TannierBS07} showed an algorithm that receives a graph $G$ with no non-singleton all-white connected components, and
constructs a sequence $S = (v_1, v_2, \ldots, v_k)$ such that each node $v_{i+1}$ is safe in $G/v_1/\ldots/v_i$, and every node of $G/S$ is
white and isolated. The algorithm proceeds in iterations while maintaining a sequence of toggles partitioned into $S_{1}$ and $S_{2}$ together
with $G/S_{1}$, terminating when there are no black nodes in $G/S_{1}$.
Every iteration consists of two phases. In the first phase, the algorithm repeatedly toggles black nodes in $G/S_{1}$,
appending each toggled node to~$S_{1}$. Then, if $S_{2}$ is nonempty and its first element is not a black node in $G/S_{1}$, the last element
of $S_{1}$ is removed and its corresponding toggle undone. In the second phase, as long as there are no black nodes in $G/S_{1}$
the last element of $S_{1}$ is removed from $S_{1}$ and prepended to $S_{2}$ and its corresponding toggle undone.
See Algorithm~\ref{alg:TBS_orig}.
There, $()$ is the empty sequence, $S_1,S_2$ denotes the concatenation of two sequences and $S,v$ or $v,S$ should be read as $S,(v)$ or $(v),S$ respectively. $S[1]$ is the first element of $S$ and $S[2\ldots]$ is the sequence containing all elements of $S$ but the first one. Correctness of the algorithm is guaranteed by the following theorem:

\begin{theorem}[\cite{TannierBS07}]\label{thm:Tannier_alg_interface}
 Consider a graph $G$ with $n$ black or white nodes with no non-singleton all-white connected components.
 A sequence $v_1,\ldots,v_k$ of nodes in $G$ such that $v_i$ is a black node in $G/v_1/\ldots/v_{i-1}$ for $1\leq i\leq k$ and $G/v_1/\ldots/v_k$ has only white isolated nodes can be found by an algorithm that maintains a sequence $S$ of nodes, the graph $G/S$ after applying the sequence of
 toggles, a subset $V$ of its nodes and performs the following operations:
 \begin{enumerate}
  \item find a black node in $V$,
  \item $\toggle(v)$ for a given black node $v$,
  \item undo the last operation $\toggle(v)$, where $v$ is the last element from $S$ and remove $v$ from the end of $S$,
  \item remove from $V$ all nodes that became isolated and white after the last toggle operation.
 \end{enumerate}
Initially, the set $V$ consists of all non-isolated nodes from $G$.
Every node is toggled at most once (and possibly this is once undone) and each of the above operations is performed in total $\Oh(n)$ times.
Apart from supporting the above operations, the algorithm takes $\Oh(n)$ time. 
\end{theorem}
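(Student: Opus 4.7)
The plan is to verify that Algorithm~\ref{alg:TBS_orig}, the procedure of Tannier, Bergeron, and Sagot~\cite{TannierBS07} expressed as a driver that invokes the four abstract operations on the tuple $(G/S_1, S_1, S_2, V)$, satisfies the stated bounds. Correctness---that every toggle performed is of a safe black node and that the final concatenated sequence $S_1 \cdot S_2$ leaves all nodes of $G$ white and isolated---is exactly what~\cite{TannierBS07} establishes, so I would cite it directly. The remaining task is auditing the operation counts and the auxiliary running time.

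The central invariant is that $V$ is monotonically non-increasing: the only update to $V$ is operation~4, which only removes elements. Moreover, once $v$ has been toggled, the definition of $\toggle$ leaves $v$ isolated and white, and no subsequent $\toggle(u)$ with $u \neq v$ can touch $v$, since $v \notin \NN^+(u)$ whenever $v$ is isolated. Hence operation~4 removes $v$ from $V$ immediately after its first (and only) toggle, so every node appears in $V$ at most once. Since Phase~1 selects the next candidate via operation~1 from $V$, each node is toggled at most once, bounding operation~2 by $n$. Every invocation of operation~3 undoes a unique prior toggle of the same node, so operation~3 is also bounded by $n$, and operation~4 is invoked at most once per toggle.

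It remains to bound operation~1 and the auxiliary running time. Each successful call to operation~1 is immediately followed by a toggle, so at most $n$ succeed. An unsuccessful call ends the current Phase~1, so it suffices to show that the number of outer iterations is $O(n)$; this follows because each outer iteration that does not terminate the algorithm either performs a toggle or undoes one (either as the Phase~1 rollback forced by the head-of-$S_2$ check, or as a Phase~2 transfer into $S_2$), and the sum of toggles and undos is $O(n)$ by the previous paragraph. For the auxiliary $O(n)$ time, $S_1$ and $S_2$ are represented as doubly linked lists with constant-time endpoint access, so the appends, prepends, and head-of-$S_2$ tests prescribed by the two phases cost $O(1)$ per step. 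The main obstacle is the correctness claim, which relies on the safe-node theory developed in~\cite{HannenhalliP99,KaplanST99,TannierBS07}; as this is already established in the literature, I would invoke it as a black box and devote my energy purely to the operation-count audit described above.
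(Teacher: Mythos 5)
Your operation-count audit is essentially sound: once a node is toggled it leaves $V$ and is never re-inserted (neither by an undo, which does not restore $V$, nor by operation~4, which only removes), so each node is toggled at most once; each undo removes the last toggled node from $S_1$ and that node cannot be re-toggled, so each toggle is undone at most once; and each outer iteration consumes at least one toggle or one undo, bounding both the failed invocations of operation~1 and the total iterations by $O(n)$. The linked-list representation of $S_1,S_2$ for $O(1)$ append/prepend/head-test is the obvious implementation. All of this matches what the paper implicitly relies on.

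The genuine gap is your plan to obtain correctness by citing~\cite{TannierBS07} as a black box. The paper explicitly states the opposite: ``Unfortunately, \cite{TannierBS07} does not provide a complete proof of the above theorem,'' and the entire Appendix~\ref{se:proof_Tannier_alg} is devoted to supplying what is missing. Concretely, \cite{TannierBS07} proves (their Theorem~3) that a maximal but not total valid sequence $S$ can be split as $S_1,S_2$ and extended to a longer valid sequence $S_1,S',S_2$, but then states Algorithm~\ref{alg:TBS_orig} without a correctness argument. Two things are left unaddressed there and would also be unaddressed in your write-up: (i) the extension lemma needs to produce a \emph{maximal} sequence at each step so that the loop invariant can be maintained, and the argument has to rule out the algorithm getting stuck (e.g.\ Phase~2 exhausting $S_1$ while $V$ is still nonempty without a black node --- the paper rules this out via Lemma~\ref{le:lemma2}, showing $|S_3|\ge 2$); and (ii) the algorithm as written never recomputes the split point from scratch --- it only moves it left --- so one must prove the split position is monotone (Lemma~\ref{le:monotonicity_of_s}) to establish that the interleaved undo/toggle driver simulates the repeated-resplit procedure at all. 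Without (ii), the claim that the final $S_1,S_2$ is a valid sequence ending in an all-white, all-isolated graph does not follow from anything in~\cite{TannierBS07}, and your proof would not compile into a correct argument. In short: your counting is fine, but the ``invoke correctness as a black box'' step is precisely the hole the paper's appendix exists to close, and it is nontrivial --- it occupies Claims~\ref{cl:claim1}--\ref{cl:neighborhood_of_vmid} and Lemmas~\ref{le:Lemma1}--\ref{le:monotonicity_of_s}.
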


Unfortunately, \cite{TannierBS07} does not provide a complete proof of the above theorem. More specifically, it first shows in Theorem 3
that any maximal but not total valid sequence $S$ for $G$ can be split into $S=S_{1}, S_{2}$ so that we can find a nonempty
sequence $S'$ for which $S_{1},S',S_{2}$ is a valid sequence for $G$. Next, it states Algorithm~\ref{alg:TBS_orig}, but without any proof
of correctness. At least two issues need to be clarified. First, the algorithm needs to maintain a maximal sequence, so the proof of Theorem 3
needs to be modified so that given a maximal sequence it outputs a longer maximal sequence. Second, the efficiency of the algorithm
hinges on the observation that the position at which we split $S$ into $S_{1}$ and $S_{2}$ never moves to the right, so this needs to be
proven.
As the algorithm serves as a foundation of our improvement, we felt obliged to provide a full proof, see  Appendix~\ref{se:proof_Tannier_alg}.
We stress that the algorithm itself is exactly the same, and the main ideas of the proof were already present in the original paper.

\begin{algorithm}[h]
\begin{algorithmic}[1]
\Function{Process}{graph $G$ with no non-singleton all-white components}
\State $V$ is the set of all nodes of $G$ that are non-isolated or black
\State $S_1,S_2:=()$
\Statex
\Function{\UndoLastMove}{}()
\State  $w:=$ last element of $S_1$
\State  remove the last element (that is: $w$) from $S_1$
\State  undo $\toggle(w)$
\State  \Return w
\EndFunction 
\Statex

\While{there is a black node $v$ in $V$} \label{li:TBS_orig_while}
\State  $\toggle(v)$
\State  remove from $V$ all nodes that became isolated and white (in particular: node $v$)
\State  $S_1:= S_1,v$
\EndWhile
\If{$S_2\ne()$ and $S_2[1]$ is white}
\State \textsc{\UndoLastMove}()
\EndIf
\If{$V$ is empty}
\State \Return $S_1,S_2$
\EndIf
\While{there is no black node in $V$}
\State $w:=$\textsc{\UndoLastMove}()
\State $S_2:=w,S_2$
\EndWhile
\State go to line \ref{li:TBS_orig_while}

\EndFunction
\end{algorithmic}
\caption{Algorithm of Tannier, Bergeron, and Sagot~\cite{TannierBS07}}
\label{alg:TBS_orig}
\end{algorithm}

Now we utilize all the above observations and show that we can efficiently sort by reversals using only a small set of operations on the permutation.
This resembles but extends the interface provided by~\cite{Han06}.

\begin{theorem}\label{thm:interface_on_permutation}
 For every signed permutation $\pi$ of $n$ elements there exists an algorithm that finds a shortest sequence of reversals sorting $\pi$. The algorithm maintains $\pi$ and a set $V\subseteq \{1,2,\ldots, n\}$ under the following operations:
 \begin{enumerate}
  \item\label{item:query_for_pi} query for $\pi_i$ or $\pi^{-1}_i$,
  \item\label{item:get_oriented_i} find $i\in V$ such that $i$ and $i+1$ have different signs in $\pi$, 
  \item\label{item:apply_reversal} apply to $\pi$ a signed reversal of a given interval,
  \item \label{item:remove} remove an element from $V$.
 \end{enumerate}
The algorithm performs $\Oh(n)$ such operations and additionally takes $\Oh(n)$ time.
\end{theorem}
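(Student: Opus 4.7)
The plan is to instantiate the algorithm of Theorem~\ref{thm:Tannier_alg_interface} on $G = OV(\pi)$ and simulate each of its four abstract operations with a constant number of calls to the interface operations~\ref{item:query_for_pi}--\ref{item:remove}, never materialising the overlap graph explicitly. The graph $OV(\pi)$ (and more generally the maintained $G/S_1$) is encoded by the current permutation via Lemma~\ref{le:toggle_is_reversal}, so as long as we keep $\pi$ in sync with the sequence of toggles we never need to store $G/S_1$. Before invoking Theorem~\ref{thm:Tannier_alg_interface}, I would use Theorem~\ref{th:initial_treatment} to compute in $\Oh(n)$ time a prefix of $t$ reversals, applied to $\pi$ via operation~\ref{item:apply_reversal} and emitted as the prefix of the output, leaving $\pi$ in a state with no non-singleton all-white connected component in $OV(\pi)$, which meets the hypothesis of Theorem~\ref{thm:Tannier_alg_interface} and triggers Theorem~\ref{thm:safe_decreases_d} for overall optimality.

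Simulating the first three abstract operations of Theorem~\ref{thm:Tannier_alg_interface} is straightforward. Finding a black node in $V$ is literally operation~\ref{item:get_oriented_i}, since $v_i$ is black iff the signs of $i$ and $i+1$ in $\pi$ differ. To perform $\toggle(v_i)$, I query via operation~\ref{item:query_for_pi} the positions of $\pm i$ and $\pm(i+1)$, take their minimum $a$ and maximum $b$, decide between reversing $[a, b-1]$ and $[a+1, b]$ by checking whether $\pi_a + \pi_b = 1$ as in the definition of $\rho(v_i)$, apply the chosen reversal via operation~\ref{item:apply_reversal}, and push the interval on a stack of performed toggles. To undo the last toggle, I pop the stack and reapply the same reversal via operation~\ref{item:apply_reversal}, using that a reversal is its own inverse.

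The main step of the plan, and the only one requiring a nontrivial argument, is the fourth operation: removing from $V$ all nodes that have become isolated and white after the last toggle. By Fact~\ref{obs:adjacencies}, a node $v_k$ is isolated iff some pair of positions $(j, j+1)$ satisfies $\pi_j + 1 = \pi_{j+1}$. I would argue that after reversing an interval $[l, r]$, the pairs $(j, j+1)$ lying strictly inside $[l, r]$ that satisfy the adjacency relation are exactly the mirrored images of the pairs that satisfied it beforehand: the new values at positions $j, j+1$ are $-\pi_{l+r-j}$ and $-\pi_{l+r-j-1}$, and the equation $-\pi_{l+r-j} + 1 = -\pi_{l+r-j-1}$ is equivalent to $\pi_{l+r-j-1} + 1 = \pi_{l+r-j}$, so the set of indices $k$ with $v_k$ isolated inside $[l, r]$ is unchanged, while nothing outside $[l-1, r+1]$ is touched at all. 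Hence any newly isolated node must come from one of the two boundary pairs $(l-1, l)$ or $(r, r+1)$. I test each boundary using operation~\ref{item:query_for_pi} and, for each boundary currently satisfying the adjacency relation, read off the corresponding $k = \min(|\pi_j|, |\pi_{j+1}|)$ and remove it via operation~\ref{item:remove}; one of the two boundaries will always expose the newly created adjacency for the toggled node $v_i$ itself.

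To conclude, Theorem~\ref{thm:Tannier_alg_interface} performs its four abstract operations $\Oh(n)$ times in total with $\Oh(n)$ further work, each abstract operation is implemented above using $\Oh(1)$ interface operations and $\Oh(1)$ extra work, and the preprocessing from Theorem~\ref{th:initial_treatment} contributes another $\Oh(n)$ interface operations and time. Concatenating the $t$ preprocessing reversals with the reversals $\rho(v_{i_1}), \rho(v_{i_2}), \ldots$ corresponding to the toggle sequence returned by Theorem~\ref{thm:Tannier_alg_interface} yields, by Lemma~\ref{le:toggle_is_reversal} and Theorem~\ref{thm:safe_decreases_d}, a sorting sequence for the original $\pi$ of the optimal length $d(\pi)$.
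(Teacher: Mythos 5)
Your proof is correct and follows essentially the same reduction as the paper: preprocess with Theorem~\ref{th:initial_treatment}, then drive the algorithm of Theorem~\ref{thm:Tannier_alg_interface} on $OV(\pi)$ while translating its four abstract operations into interface calls via Lemma~\ref{le:toggle_is_reversal}, including the key implementation details — storing each applied interval on a stack so that undos can be replayed (since the interval cannot be recovered from $\pi$ after the toggle), and restricting the post-toggle adjacency check to the two boundary pairs of the reversed interval. Your explicit mirroring calculation showing that interior pairs preserve the adjacency set is a slightly more detailed spelling-out of what the paper states in one sentence, but the argument and bookkeeping are otherwise identical.
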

\begin{proof}
We start with extending the permutation $\pi$ with $0$ and $n+1$ at the ends. If $\OV(\pi)$ contains non-singleton all-white connected components, we first compute in $\Oh(n)$ time the sequence of reversals from Theorem~\ref{th:initial_treatment} and apply them to $\pi$. From now on, we assume that every all-white connected components of $\OV(\pi)$ is a singleton. 

Next, we simulate the algorithm from Theorem~\ref{thm:Tannier_alg_interface} on $\OV(\pi)$ based on Lemma~\ref{le:toggle_is_reversal}.
By Fact~\ref{obs:adjacencies}, we initialize $V=\{\max\{\pi_{j},-\pi_{j+1}\}\text{ for } j: \pi_j+1 \neq \pi_{j+1}\}$.
By definition, a black node in $\OV(\pi)$ corresponds to $i$ such that $i$ and $i+1$ have different signs in $\pi$, so we can find it using operation~\ref{item:get_oriented_i} on~$\pi$.
By Lemma~\ref{le:toggle_is_reversal}, toggling a black node $v$ from $OV(\pi)$ corresponds to a reversal of the interval~$\rho(v)$. To undo the last toggle $\toggle(v)$, it suffices to reverse $\rho(v)$ one more time as reversing an interval twice is the identity. However, we cannot retrieve the interval $\rho(v)$ solely from the permutation, because we do not know one of the endpoints of the interval.
Then for every node~$v$ in the sequence~$S$ we store the interval $\rho(v)$ explicitly.

Finally, after a reversal of an interval in $\pi$ we can obtain at most two new adjacencies in $\pi$, at both ends of the interval, because pairs of positions fully inside or outside the interval cannot become or stop being an adjacency.
Hence we can query elements adjacent to the endpoints of the reversed interval, check if an adjacency appeared and, if yes, remove the corresponding element from~$V$.
Hence, we can implement operations 1-4 from Theorem~\ref{thm:Tannier_alg_interface} using operations 1-4 from Theorem~\ref{thm:interface_on_permutation}.

When the algorithm terminates, $\OV(\pi)$ consists only of white isolated nodes. Furthermore, by construction, all chosen nodes were safe (otherwise we would have obtained an all-white connected component), so due to Theorem~\ref{thm:safe_decreases_d} the sequence is the shortest possible.
\end{proof}

\section{Improved algorithm for sorting signed permutations}
\label{sec:faster}
In this section, we show how to efficiently implement all the four operations required in the interface presented in Theorem~\ref{thm:interface_on_permutation}. Recall that $\pi$ is a signed permutation on~$n$ elements and augment it with $0$ and $n+1$ at the ends that do not take part in any reversal.
\begin{fact}[\cite{KaplanV05}, Theorem 5]\label{fact:bst_for_maintaining_pi}
 There exists algorithm implementing operations~\ref{item:query_for_pi} (query for $\pi_i$ or~$\pi_i^{-1}$) under reversals on $\pi$ using $\Oh(\log n)$ time per query or reversal.
\end{fact}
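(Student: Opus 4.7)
The plan is to maintain $\pi$ in a balanced binary search tree (for instance a treap or a splay tree) keyed by position in the sequence, where each node stores one element of $\pi$ together with its sign and the size of its subtree. A query for $\pi_i$ then reduces to a standard rank-based descent using subtree sizes, taking $O(\log n)$ time. For $\pi_i^{-1}$ I would also maintain an auxiliary array $\mathrm{ptr}[\,\cdot\,]$ that, for each value $i\in\{0,1,\ldots,n+1\}$, stores a pointer to the unique BST node holding the element with absolute value $i$; these pointers are never invalidated by reversals because the set of nodes is fixed and only the tree structure (and some lazy flags) change. To answer $\pi_i^{-1}$ I walk up from $\mathrm{ptr}[i]$ to the root, accumulating subtree-size contributions of left subtrees at each step, which takes $O(\log n)$ time provided the tree has logarithmic depth.

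To support interval reversals, I would use the classical lazy-propagation trick: each node carries a boolean \emph{pending reversal} flag whose semantics is that the in-order traversal of the subtree rooted at the node should be read in reverse, with every stored sign additionally flipped (recall that a reversal $\rho$ in our problem both reverses the interval and negates the signs, so a single flag simultaneously encodes both effects). Before descending into a node during any operation, I push its flag down by swapping the node's two children, negating the sign stored at the node itself, and toggling the flag of both children; this makes the node's local structure match its current logical meaning. A reversal of an interval $[a,b]$ is then implemented by splitting the BST into three parts corresponding to positions $[1,a-1]$, $[a,b]$ and $[b+1,n+2]$, toggling the pending-reversal flag at the root of the middle part, and merging the three parts back together, which costs $O(\log n)$ per split/merge in a treap (in expectation) or a splay tree (amortised).

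For the bottom-up traversal used to answer $\pi_i^{-1}$ there is a small subtlety: the accumulated left-subtree-size may be affected by an unresolved pending reversal lying on the root-to-node path. I would handle this by first walking down from the root to $\mathrm{ptr}[i]$, pushing all flags along the way, and only then walking back up to accumulate ranks; this preserves the $O(\log n)$ bound and removes any ambiguity about orientation.

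The main obstacle is to verify that the lazy flag composes correctly with splits, merges, and (in treaps) rotations. Concretely, one has to check that (i) pushing the flag down is an involution that preserves the represented sequence, (ii) a rotation performed on a node whose flag has already been pushed down produces the same represented sequence, and (iii) split/merge operations, which only descend along a single root-to-leaf path, can always push flags ahead of themselves so that they operate on subtrees whose roots are in their canonical orientation. Once these invariants are established, correctness follows by induction on tree height and the claimed $O(\log n)$ per-operation bound is immediate from the standard guarantees on treaps or splay trees.
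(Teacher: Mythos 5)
Your proposal is correct and matches the Kaplan--Verbin data structure that the paper cites here and later recaps inside the proof of Lemma~\ref{lm:number_of_reversals}: a balanced BST over positions storing subtree sizes and a lazy reverse-and-flip-sign flag, with reversals implemented by split, flag toggle, and join, and with $\pi_i^{-1}$ answered by locating the node holding value $i$ via a pointer array and converting it to a rank while pushing flags along the root-to-node path. One small caveat worth noting: the paper deliberately uses a red-black tree so that node depth is $O(\log n)$ in the worst case, which is later required when Lemma~\ref{lm:number_of_reversals} augments this very tree with shortcut pointers of length $\lfloor\varepsilon\log\log n\rfloor$; a splay tree would only give an amortized guarantee there, whereas a treap would work in expectation.
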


Now we explain how to implement operations~\ref{item:get_oriented_i}-\ref{item:remove}.
We create the following graph $G(\pi)$ on $\Oh(n)$ nodes, $\Oh(n)$ blue edges and $n+1$ red edges.
There are two groups of nodes: $0,1,2,3,\ldots,(n+1)$ and $0',1',\ldots,(n+1)'$, where we think that both $i$ and $i'$
correspond to the element $i$ of the permutation.
For every $0\leq i\leq n$, we create a blue edge $b_i = \{i,(i+1)\}$ and a blue edge $b'_i = \{i',(i+1)'\}$. Additionally, if $i$ and $i+1$ have the same sign in $\pi$ we create two red edges $r_i = \{i,(i+1)\}$ and $r_i'=\{i',(i+1)'\}$.
Otherwise, if $i$ and $i+1$ have different sign in $\pi$ we create two red edges $r_i=\{i,(i+1)'\}$ and $r_i'=\{i',(i+1)\}$.
See Figure~\ref{fig:blue-red-ex0}. Intuitively, we will use the blue edges to simulate reversals on the permutation (operation~\ref{item:apply_reversal}), and the red edges correspond to the elements of $V$ (operations~\ref{item:get_oriented_i} and~\ref{item:remove}).

A \emph{blue connected component} is a connected component of a subgraph of $G(\pi)$ containing all nodes of $G(\pi)$, but only the blue edges. 

\FIGURE{h}{.5}{blue-red-ex0}{$G(\pi)$ for $\pi=(0,-2,-5,1,-4,6,-3,7)$.
Red edges are dashed and blue edges are solid.}

Let $c$ be the complementation function that satisfies $c(i)=i'$ and $c(i')=i$, for $i\in\{0,\ldots,(n+1)\}$.
 Our algorithm maintains the following invariants:
\begin{enumerate}
 \item There are two blue connected components.
 \item\label{inv:symmetric_components} There is a blue edge $\{a,b\}$ if and only if there is a blue edge $\{c(a),c(b)\}$.
 \item Every blue connected component is a path which, when being read from one of the endpoints ($0$ or $0'$), consists of nodes corresponding to $|\pi_0|, |\pi_1|, \ldots, |\pi_{n+1}|$ in this order.
\end{enumerate}

\noindent
Clearly, the invariants hold before we apply any reversal. To simulate operation~\ref{item:apply_reversal} from Theorem~\ref{thm:interface_on_permutation}, that is, a reversal of an interval $[a,b]$, we perform an operation hereafter referred to as \emph{reconnection} as follows.
We reconnect the four blue edges in such a way that we reverse the nodes at the positions $a,a+1,\ldots,b$ on both paths and ``change their sides''.
See Figure~\ref{fig:reattach} for an example.
Formally, if $b_{a-1} = \{x,y\}$ and $b_{a-1}' = \{u,v\}$, where both $y$ and $v$ correspond to $|\pi(a)|$,
we replace them with $b_{a-1} = \{x,v\}$ and $b_{a-1}' = \{u,y\}$.
Analogously, if $b_b = \{p,t\}$ and $b'_b = \{s,q\}$, where $p$ and~$s$ correspond to $|\pi(b)|$, we replace them with
$b_b = \{p,q\}$ and $b'_b = \{s,t\}$. All other edges remain as they were. See Figure~\ref{fig:blue-red-ex1}.
Clearly, the above invariants are preserved after a reconnection.
We note that translating a reversal into insertion and deletion of edges of $G(\pi)$ can be done efficiently using operation~\ref{item:query_for_pi}.

\FIGURE{h}{.85}{reattach}{Reconnection of edges during the reversal of an interval $[a,b]$.
By invariant~\ref{inv:symmetric_components}, $c(x)=u, c(y)=v, c(p)=s$ and $c(q)=t$.
$c(S)$ denotes complementing all nodes in $S$, and $P^R$ denotes reversing the path $P$.
}

\FIGURE{h}{.5}{blue-red-ex1}{The result of reversing the interval $[2,4]$ on the graph from Figure~\ref{fig:blue-red-ex0}.}

To simulate operation~\ref{item:remove} (removal of an element $i$ from $V$), we remove $r_i$ and $r_i'$ from $G(\pi)$. This operation does not violate the invariants either. 
We finally explain how to simulate operation~\ref{item:get_oriented_i} (find $i \in V$ such that $i$ and $i+1$ have different signs in $\pi$). 
Recall that no reversal includes element~0.
Let $\#(i)$ be the number of performed reversals that changed the sign of $i$.
The blue components fully describe the signs of the elements of $\pi$:

\begin{lemma}\label{le:same_blue}
Node $i$ is in the same blue component as node $0$ if and only if $\#(i)$ is even.
\end{lemma}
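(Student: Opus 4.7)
My plan is to argue by induction on the number of reversals applied to $G(\pi)$ since its construction, with $\#(i)$ measured relative to that same starting point. For the base case, immediately after construction no reversal has been applied, so $\#(i)=0$ (even) for every $i$. The blue subgraph consists of two paths, and by Invariant~3 the path through $0$ visits all elements $|\pi_0|, |\pi_1|, \ldots, |\pi_{n+1}|$, which form a permutation of $\{0,1,\ldots,n+1\}$; hence every integer-labeled node lies in the blue component of~$0$, and both sides of the equivalence are true.

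For the inductive step, I would isolate the effect of a single reversal of some interval $[a,b]$ on the two sides of the equivalence. Because node~$0$ sits at position~$0$ and is never involved in any reversal, the reversal flips the sign of precisely the elements at positions $a,a+1,\ldots,b$, which by Invariant~3 are the labels $|\pi_a|, |\pi_{a+1}|, \ldots, |\pi_b|$. The parity of $\#(i)$ therefore toggles for exactly these $i$ and is preserved for every other $i$. On the graph side, I would inspect the reconnection pictured in Figure~\ref{fig:reattach}: only the four boundary blue edges $b_{a-1}, b'_{a-1}, b_{b}, b'_{b}$ are altered, and they are reinserted in the crossed pattern, while every interior blue edge along positions $a, \ldots, b$ is left intact. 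As a result, the whole segment at positions $a,\ldots,b$ detaches from its current path and re-attaches to the other one, so combining with Invariant~\ref{inv:symmetric_components} the nodes with label in $\{|\pi_a|,\ldots,|\pi_b|\}$ (both integer and primed) swap blue components, while every other node is unaffected. Thus the component flip and the parity flip happen for exactly the same set of elements, and the invariant propagates to step $k+1$.

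The only real obstacle is the bookkeeping around the four-edge reconnection: one has to verify that the crossed reinsertion of the boundary edges, together with the untouched interior edges, truly migrates the entire segment between paths without changing the membership of any node outside positions $a,\ldots,b$. Once this is checked, essentially as a direct read-off from the definition of ``reconnection'' combined with Invariant~\ref{inv:symmetric_components}, the parity argument writes itself and the induction closes.
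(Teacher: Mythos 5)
Your proof is correct and follows essentially the same induction as the paper: the base case from the initial construction of $G(\pi)$, and the inductive step observing that the reconnection swaps the blue-component membership of exactly those nodes whose labels lie at positions $a,\ldots,b$, in lockstep with the parity flip of $\#(i)$. The only cosmetic difference is that the paper appeals directly to the definition of $G(\pi)$ for the base case rather than to Invariant~3.
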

\begin{proof}
 By induction on the number of performed reversals.
 By definition of $G(\pi)$, in the beginning every node $i$ is in the same blue component as $0$ and every node $i'$ is in the same component as~$0'$.
 Consider the graph after reversing an interval $[a,b]$.
 Observe that all nodes $j$ and $j'$ such that $\pi^{-1}_j<a$ or $\pi^{-1}_j>b$ are unaffected by the reversal (with regards to being in the same component as $0$), so the claim follows for them by induction.
 On the other hand, nodes $j$ and $j'$ such that $a\leq \pi^{-1}_j\leq b$ change, that is after the reconnection they are in the component of $0$ if and only if they were not before while the parity of $\#(j)$ changes.
 \end{proof}

We immediately derive a property that allows efficient implementation of operation~\ref{item:get_oriented_i} from Theorem~\ref{thm:interface_on_permutation}: 

\begin{corollary}\label{cor:active_red}
 After a sequence of reversals, the endpoints of a red edge $r_i$   are in distinct blue components if and only if $i$ and $i+1$ have different signs in $\pi$. In this case, $r_i$ is called \emph{active}. Symmetrically for a red edge $r_i'$.
\end{corollary}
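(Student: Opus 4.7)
The plan is to derive the corollary by combining Lemma~\ref{le:same_blue} with invariant~\ref{inv:symmetric_components}. First I would observe that red edges are set up only at initialization according to the initial signs of the permutation, and that no reconnection ever touches them; only operation~\ref{item:remove} can delete a red edge. Hence throughout the execution of the algorithm the endpoints of $r_i$ remain exactly those fixed at the start, namely $\{i,(i+1)\}$ if $i$ and $i+1$ initially have the same sign, and $\{i,(i+1)'\}$ otherwise.

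Next I would read off the blue component of every endpoint using Lemma~\ref{le:same_blue} plus invariant~\ref{inv:symmetric_components}: node $j$ lies in the blue component of $0$ iff $\#(j)$ is even, and, since there are only two blue components and the blue-edge set is symmetric under $c$, node $j'$ lies in the blue component of $0$ iff $\#(j)$ is odd. The argument then splits on the initial signs of $i$ and $i+1$. If they initially agree, $r_i=\{i,(i+1)\}$, and the endpoints lie in distinct blue components iff $\#(i)$ and $\#(i+1)$ have different parities; since the current sign of any element is its initial sign flipped $\#(\cdot)$ times, this is exactly the condition that $i$ and $i+1$ currently disagree. If they initially disagree, $r_i=\{i,(i+1)'\}$, and by the primed-endpoint rule the endpoints lie in distinct blue components iff $\#(i)$ and $\#(i+1)$ have the same parity; since the signs started different, equal-parity flips preserve the disagreement, so again the condition matches.

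The symmetric statement for $r_i'$ follows at once from invariant~\ref{inv:symmetric_components}: applying $c$ to every node preserves the blue-edge structure and thus the partition into blue components, so the endpoints of $r_i'$ are in distinct blue components iff those of $r_i$ are. I do not anticipate a genuine obstacle here; the only care needed is keeping straight which endpoint is primed in each of the two initial-sign cases and correctly tracking the parity of $\#(\cdot)$.
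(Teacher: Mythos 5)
Your proof is correct and follows essentially the same approach as the paper: fix that red edges are never modified, verify the claim at initialization, and use Lemma~\ref{le:same_blue} to track the blue component of each endpoint via the parity of $\#(\cdot)$. You spell out the case analysis and the primed-node parity rule (via invariant~\ref{inv:symmetric_components}) that the paper's two-line proof leaves implicit, but the key ingredients and structure are the same.
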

\begin{proof}
By construction, the claim holds at initialisation. As the red edges are never affected by a reconnection, the claim follows by Lemma~\ref{le:same_blue}. 
\end{proof}

In other words, to implement operation~\ref{item:get_oriented_i} it suffices to find an active red edge in $G(\pi)$ under insertions and deletions of blue edges and deletion of red edges. As a warm-up, we show a very simple $\Oh(\log^4n)$ amortized time algorithm. 

\begin{theorem}\label{th:main_MST}
There exists an algorithm supporting operations~\ref{item:get_oriented_i}-\ref{item:remove} from Theorem~\ref{thm:interface_on_permutation} on a permutation $\pi$ on $n$ elements in $\Oh(\log^4n)$ amortized time.
\end{theorem}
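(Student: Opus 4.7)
The plan is to feed the graph $G(\pi)$ into the fully dynamic minimum spanning forest structure of Holm, Lichtenberg, and Thorup~\cite{HolmLT01}, assigning every blue edge weight $0$ and every red edge weight $1$. Operation~\ref{item:apply_reversal} then triggers four blue edge replacements (whose endpoints are located in $\Oh(\log n)$ time using Fact~\ref{fact:bst_for_maintaining_pi}), operation~\ref{item:remove} triggers two red edge deletions, and operation~\ref{item:get_oriented_i} becomes a structural query answered from the MSF. Each MSF update costs $\Oh(\log^4 n)$ amortized, so we get the claimed bound as long as operation~\ref{item:get_oriented_i} can also be answered in $\Oh(\log^4 n)$.

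The key structural observation is that the blue edges always form exactly two connected components (the first invariant of the algorithm) which together span all $2(n+2)$ nodes. Because blue edges are strictly lighter than red ones, the MSF first fixes a spanning forest of the blue subgraph and then adds at most one red edge; such an extra edge exists iff the whole graph is connected, in which case it necessarily joins the two blue components and is therefore, by Corollary~\ref{cor:active_red}, active. Thus operation~\ref{item:get_oriented_i} reduces to: \emph{report a red edge of the MSF if one exists, and otherwise declare that none exists}.

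To answer this query I would augment the Euler-tour tree underlying the MSF with a single bit per subtree indicating whether the subtree contains a red edge. Finding such an edge then reduces to a standard top-down descent in $\Oh(\log n)$ time, while propagating the bit whenever the MSF changes adds only $\Oh(\log n)$ overhead per MSF edge insertion or deletion. Since each permutation operation triggers only $\Oh(1)$ MSF updates, the amortized cost remains $\Oh(\log^4 n)$.

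The main obstacle is the plumbing: HLT does not natively expose queries about attributes of the MSF, so one needs to thread the ``subtree contains a red edge'' bit through its internal Euler-tour trees (or, equivalently, maintain a parallel ET-tree dedicated to the MSF and feed it every tree-edge swap that HLT performs). Once this bookkeeping is in place the analysis is immediate: every reversal, every removal from $V$, and every active-red-edge query runs in $\Oh(\log^4 n)$ amortized time.
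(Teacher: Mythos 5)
Your reduction to HLT's dynamic MSF is sound, and your structural observations match the paper's: the MSF consists of all blue edges plus at most one red edge, and such a red edge exists iff the graph is connected, in which case it necessarily crosses the two blue components and is therefore active by Corollary~\ref{cor:active_red}. Where you diverge from the paper is in the \emph{extraction} step. You give all red edges a uniform weight~$1$ and then propose augmenting HLT's internal Euler-tour trees with a ``subtree contains a red edge'' bit so that the red MSF edge can be found by a top-down descent --- and you yourself flag this plumbing as the main obstacle, since HLT does not natively expose such attribute queries on the MSF and threading the bit through its hierarchy of ET-trees (or mirroring its tree-edge swaps in a parallel ET-tree) is real work. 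The paper avoids this obstacle entirely: it assigns $r_i$ and $r_i'$ the \emph{distinct} weight $i+1$ and also adds a single extra edge $\{0,0'\}$ of weight $\infty$ so the graph is always connected. The MST then contains exactly one non-blue edge, and its total weight is $\infty$ when no red edge is active and $i+1$ otherwise, where $i$ is the smallest index with $r_i$ active. Reading off the total MST weight --- a scalar the structure maintains for free --- immediately names a specific active red edge with no augmentation of HLT at all. Both routes are correct, but the distinct-weight trick collapses the query you would implement via nontrivial bookkeeping into a constant-time weight lookup, which is the whole point of the lemma.
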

\begin{proof}
We assign weights to the edges of $G(\pi)$ in the following way: all blue edges get weight 0 and for every $i$, the edges $r_i$ and $r_{i}'$ get weight $i+1$. Additionally, we add an edge between $0$ and $0'$ with weight $\infty$.
As the 0-weight edges divide the graph into two components (recall the invariants of the reconnection operation), with the additional edge $\{0,0'\}$
the graph is always connected. Hence, its minimum spanning tree (MST) has total weight $\infty$ when there is no active edge.
Otherwise, namely when there is at least one active red edge, its weight is $w$, where $w-1$ is the smallest $i$ such that $r_i$ is active.
Thus from the weight of the MST we can retrieve an active red edge. By applying an existing approach for dynamic MST that runs in
$\Oh(\log^4n)$ amortized time by Holm et al.~\cite{HolmLT01}, the claim follows.
\end{proof}

Now we show a more efficient approach that does not use dynamic MST.
First we remind some properties of link-cut trees.

\begin{definition}[Link-cut tree~\cite{link/cut}]\label{def:link-cut}
A link-cut tree is a data structure for maintaining a forest (a set of rooted trees) subject to the following operations:
\begin{enumerate}
\item $make\_tree()$: Add a tree consisting of a single node to the forest.
\item $cut(v)$: Cuts a tree containing a non-root node $v$ into two by deleting the edge from the parent of $v$ to $v$;
\item $link(v,w)$: Given two nodes $v,w$ in different trees of the forest, links them by making $v$ a child $w$. The operation assumes that $v$ is the root of the tree it belongs to. 
\item $find\_root(v)$: Finds the root of the tree containing a node $v$.
\end{enumerate}
\end{definition}

Sleator and Tarjan~\cite{link/cut} showed an implementation of link-cut trees with $O(\log n)$ amortized time per operation. Their implementation partitions each tree $T$ in the forest into so-called \emph{preferred} paths, defined as follows: A child $w$ of a node $v$ is the preferred child if the last access (to be described later) within $v$’s subtree was in $w$'s subtree. In particular, if the last access within $v$'s subtree was to $v$, it has no preferred child. A preferred edge is an edge between a preferred child and its parent.
A preferred path is a maximal continuous path of preferred edges in a tree, or a single node if there is no preferred edge incident with it. On top of each preferred path, Sleator and Tarjan maintain a biased 2-3 search tree~\cite{4567825} (a balanced binary search tree), where the nodes are keyed by their depths in $T$.\footnote{Modern variations of link-cut trees represent the paths with splay trees~\cite{splaytrees} to simplify the analysis, but we stick to the original biased 2-3 search trees which are guaranteed to have $\Oh(\log n)$ depth.} 

All operations from Definition~\ref{def:link-cut} are implemented with $access(v)$. If a node $v$ belongs to a tree~$T$ of the forest, the subroutine simulates accessing the nodes in the path from the root of~$T$ to~$v$ by updating the preferred paths. As a corollary, if we access a node $v$ and then a node $w$ of~$T$, the unique path between them will span at most two preferred paths, which will be important later. 

\begin{theorem}\label{thm:main_result}
 There exists an algorithm supporting operations~\ref{item:get_oriented_i}-\ref{item:remove} from Theorem~\ref{thm:interface_on_permutation} on a permutation $\pi$ on $n$ elements in $\Oh(\log^2 n)$ amortized time.
\end{theorem}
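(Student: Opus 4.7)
The plan is to replace the dynamic MST used in Theorem~\ref{th:main_MST} by the fully dynamic connectivity structure of Holm, Lichtenberg, and Thorup~\cite{HolmLT01} on $G(\pi)$, which supports edge insertions and deletions in $\Oh(\log^{2}n)$ amortized time and internally maintains \emph{some} spanning forest $F$, together with an auxiliary link-cut tree (Definition~\ref{def:link-cut}) kept in sync with $F$ and augmented so that each preferred path records the number of nodes it contains. Queries for the blue component of any node of $G(\pi)$ are delegated to the BST of Fact~\ref{fact:bst_for_maintaining_pi}: by Lemma~\ref{le:same_blue}, the blue component of a node corresponding to element $i$ is determined in $\Oh(\log n)$ from the parity of sign flips that $i$ has undergone, which the BST already tracks as part of $\pi$.

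Operations~\ref{item:apply_reversal} and~\ref{item:remove} are then immediate. A reversal on $[a,b]$ uses $\Oh(1)$ queries to the BST to locate the four blue edges that must be rewired; these are deleted from, and their replacements inserted into, the HLT structure, and then the reversal itself is applied to the BST. Operation~\ref{item:remove} just removes the pair $\{r_i,r_i'\}$ from HLT. Each HLT update changes $F$ by at most two edges (at most one tree edge removed and at most one replacement added), so the auxiliary link-cut tree is kept in sync with only $\Oh(\log n)$ extra work per HLT update, strictly dominated by HLT's own $\Oh(\log^{2}n)$ amortized cost.

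For operation~\ref{item:get_oriented_i}, observe that invariants~1 and~3 imply that $0$ and $0'$ always lie in different blue components, so by Corollary~\ref{cor:active_red} an active red edge exists iff $0$ and $0'$ lie in the same component of $G(\pi)$, which HLT decides in $\Oh(\log n)$. When they do, the unique $0$-to-$0'$ path $P$ in $F$ must cross from one blue component to the other, and the edge at which this crossing happens cannot be blue, so it is a red active edge. To locate such a crossing I perform $access(0)$ followed by $access(0')$ on the link-cut tree, after which $P$ is covered by $\Oh(1)$ preferred paths and the $k$-th node of $P$ can be returned in $\Oh(\log n)$ amortized time by descending the corresponding biased $2$--$3$ trees using the size augmentation. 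Standard bisection over $P$, with each probe followed by an $\Oh(\log n)$ blue-component query through the BST and Lemma~\ref{le:same_blue}, converges in $\Oh(\log n)$ iterations to an edge whose endpoints lie in distinct blue components, giving the desired active red edge at total cost $\Oh(\log^{2}n)$ amortized.

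The main obstacle I expect is keeping the auxiliary link-cut tree faithful to the spanning forest~$F$ maintained inside HLT, because HLT is free to replace a deleted tree edge with a non-tree edge in the course of a single update. This is safe because every HLT update modifies~$F$ by $\Oh(1)$ edges only, each handled by a single $cut$ or $link$ at cost $\Oh(\log n)$, so the overhead never escapes the $\Oh(\log^{2}n)$ amortized budget. Combining everything yields the claimed $\Oh(\log^{2}n)$ amortized time for each of operations~\ref{item:get_oriented_i}--\ref{item:remove}.
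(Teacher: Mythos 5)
Your proposal is essentially the paper's own proof: you maintain a spanning forest of $G(\pi)$ in a link-cut tree on top of the Holm--de~Lichtenberg--Thorup connectivity structure, decide whether $0$ and $0'$ are connected, then $access(0)$ and $access(0')$ so that the tree path $P$ between them is covered by at most two preferred paths, and binary search $P$ for the boundary edge between the two blue components using $\Oh(\log n)$ blue-component queries each costing $\Oh(\log n)$ via Fact~\ref{fact:bst_for_maintaining_pi} and Lemma~\ref{le:same_blue}. The one place you are more explicit than the paper — the observation that HLT's spanning forest changes by $O(1)$ edges per update, so an auxiliary link-cut tree can be kept in sync within the stated budget — is a correct and useful clarification, but it does not change the overall argument.
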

\begin{proof}
Consider the graph $G(\pi)$ again. On top of $G(\pi)$, we maintain a data structure for fully-dynamic graph connectivity~\cite{HolmLT01}, which supports edge insertions/deletions in $\Oh(\log^2 n)$ amortized time and maintains a spanning forest $F$ of $G(\pi)$ in a link-cut tree.

There is an active red edge if and only if $0$ and $0'$ are connected by a path, which we can check in $O(\log n)$ time using the link-cut tree.
If $0$ and $0'$ are connected, they belong to the same spanning tree $T \in F$, and are connected by a unique path $P=v_{0}-v_{1}-\ldots -v_{k}$ of $T$,
where $v_{0}=0$ and $v_{k}=0'$.
Since $P$ starts in the blue component of $0$ and ends in the blue component of $0'$, it must contain an active red edge, because
$0$ and $0'$ are in distinct blue components.

We binary search $P$ for an active red edge using the following observation. Consider a fragment $v_{i}-v_{i+1}-\ldots -v_{j}$ of $P$ with the property
that $v_{i}$ belongs to the blue component of $0$ while $v_{j}$ belongs to the blue component of $0'$. Then, consider any $k\in \{i,i+1,\ldots,j-1\}$.
If $v_{k}$ belongs to the blue component of $0'$ then we can recurse on $v_{i}-v_{i+1}-\ldots - v_{k}$. If $v_{k+1}$ belongs to the blue component
of $0$ then we can recurse on $v_{k+1}-v_{k+2}-\ldots -v_{j}$. The remaining case is that $\{v_{k},v_{k+1}\}$ is an active red edge.

To implement the search efficiently, we first call $access(0)$ and $access(0')$ to guarantee that $P$ spans at most two preferred paths.
In fact, as the preferred paths end at $0$ or $0'$, $P$ spans the path to $0$ entirely, and spans a suffix of the path to $0'$.
Thus, $P=v_{0}-v_{1}-\ldots - v_{m} - v_{m+1}- \ldots -v_{k}$, where $v_{0}-v_{1}-\ldots -v_{m}$ is a fragment of one preferred path
and $v_{m+1} - v_{m+2} - \ldots -v_{k}$ is a fragment of the other preferred path. We first use the above observation to narrow down the
search to one of them (or terminate after having found an active red edge). Then, we traverse in the biased 2-3 tree while using the above
observation to decide if we should descend to the left or to the right child. In more detail, we first descend to the lowest common ancestor
of the endpoints of the fragment. Then, we continue the descent, in every step using the above observation twice: if the current node is
$v_{i}$, we apply it on $k=i-1$ and $k=i$. We note that accessing $v_{i-1}$ and $v_{i+1}$ can be done in constant time by maintaining
links to the predecessor/successor of each node on its path.

By Lemma~\ref{le:same_blue}, we can test if a node $v_j$ is in the same blue component as node $0$ by checking the sign of $j$ in $\pi$, which
in turn can be done in $\Oh(\log n)$ time by Fact~\ref{fact:bst_for_maintaining_pi}.
The biased 2-3 trees used to represent each path have depth $\Oh(\log n)$, so the search takes $\Oh(\log^2 n)$ time. 
\end{proof}

We are ready to present the final version of our algorithm. First, we observe that inside Theorem~\ref{thm:main_result} we can use the
faster fully dynamic connectivity structure, which supports edge insertions/deletions in $\Oh(\log^{2}n/\log\log n)$ amortized time~\cite{WulffNielsen}.
The structure can be extended to maintain a spanning forest $F$ of $G(\pi)$ stored in a link-cut tree. Then, when searching for an active
red edge we perform $\Oh(\log n)$ steps. If every step can be implemented in only $\Oh(\log n/\log \log n)$ time then the overall
complexity becomes $\Oh(n\log^{2}n/\log\log n)$. This is indeed possible thanks to the following lemma:

\begin{lemma}\label{lm:number_of_reversals}
For any $\varepsilon > 0$, there is algorithm that maintains a signed permutation $\pi$ on $n$ elements under reversals in $\Oh(\log^{1+\varepsilon} n )$ time and, given an element $i$, retrieves $\#(i)$ in $\Oh(\log n / \log \log n)$ time. 
\end{lemma}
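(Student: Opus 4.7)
The plan is to represent $\pi$ by a B-tree of branching factor $b=\lceil \log^{\varepsilon} n\rceil$, whose leaves in left-to-right order spell the current permutation. Since every internal node has between $b/2$ and $b$ children, the height of the tree is $\Oh(\log n/\log b) = \Oh(\log n / \log\log n)$. Each internal node carries two lazy tags: a boolean \emph{reverse} bit that will swap the order of its subtree, and a nonnegative integer \emph{flip counter} equal to the number of reversals that have been applied to its whole subtree but have not yet been pushed into its children. We also keep a static array mapping each element $i\in\{1,\ldots,n\}$ to its (fixed) leaf, and a parent pointer on every non-root node.

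I would implement a reversal of an interval $[a,b]$ by the standard split-split-merge-merge recipe: descend twice by subtree sizes to cut off the prefix, the interval and the suffix, toggle the reverse tag and increment the flip counter of the root of the middle piece by one, then merge the three B-trees back. Whenever a split or merge enters an internal node $v$, we first push down both of $v$'s lazy tags -- the reverse bit swaps the order of $v$'s children, and the flip counter is \emph{added} to each child's flip counter -- so that $v$ is locally in a clean state before we modify its child list. Redistribution of children between siblings at each level of the tree costs $\Oh(b)$ work, so one reversal runs in $\Oh(b\cdot\log n/\log b) = \Oh(\log^{1+\varepsilon} n)$ time, well within the target.

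To answer a query $\#(i)$ we locate the leaf of element $i$ via the static element-to-leaf array and walk upward along parent pointers, summing the flip counters stored on the visited ancestors. Because pushing a flip counter down distributes the \emph{same} value to each child rather than absorbing it, the invariant ``the sum of flip counters along any root-to-leaf path equals the actual flip count of that leaf'' is preserved by every split, merge and rotation. The walk visits only $\Oh(\log n/\log\log n)$ nodes and does $\Oh(1)$ work per node, giving the desired $\Oh(\log n/\log\log n)$ query time; crucially, the query reads the tree but never modifies it, so no amortization is required.

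The main delicate point will be keeping the per-node bookkeeping (subtree sizes used during split, parent pointers of nodes whose parents change during rebalancing, and the two lazy tags) consistent when a reversal triggers a cascade of sibling redistributions. I expect this to reduce to a standard ``push down both tags before touching the child list'' discipline and the usual worst-case analysis of B-tree splits/merges, so no new idea is needed beyond putting these ingredients together carefully.
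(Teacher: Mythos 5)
Your proof is correct, and it takes a genuinely different route from the paper's. The paper keeps the binary balanced BST of Kaplan and Verbin and augments it with shortcut pointers of stride $\lfloor\varepsilon\log\log n\rfloor$, each storing the aggregate of reverse flags along the skipped segment, so a query follows $\Oh(\log n/\log\log n)$ shortcuts and an update rebuilds the $\Oh(\log^\varepsilon n)$ shortcuts affected by each of the $\Oh(\log n)$ rotated nodes. You instead flatten the tree itself: a B-tree of branching factor $b=\lceil\log^\varepsilon n\rceil$ already has depth $\Oh(\log n/\log\log n)$, and attaching a lazily propagated additive flip counter to every node lets a query simply walk to the root summing counters, while a reversal is a split--reverse--merge with tags pushed down before a node's child list is touched, at cost $\Oh(b)$ per level and hence $\Oh(\log^{1+\varepsilon}n)$ overall. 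Both give the stated bounds; the paper shortens the query path with jump pointers, you shorten the tree. A small bonus of your version: because the tag is an integer that is added on push-down rather than a boolean that is XORed, your path sum literally equals $\#(i)$, whereas the paper's count of set reverse flags can only recover the parity of $\#(i)$ (flags cancel---reversing $[1,n]$ twice leaves them all false while $\#(i)=2$), which happens to be all Lemma~\ref{le:same_blue} actually needs. Two details to make explicit in a full write-up: pushing the reverse bit down must also toggle each child's reverse bit, not merely permute the child list; and when two siblings exchange children during B-tree rebalancing, \emph{both} siblings' flip counters must be pushed to zero before any child migrates, or the path-sum invariant for the moved subtree breaks. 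Both are covered by the ``push down before touching the child list'' discipline you already state, so the plan is sound.
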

\begin{proof}
Our data structure is a slightly modified data structure from Fact~\ref{fact:bst_for_maintaining_pi} \cite[Theorem 5]{KaplanV05} which we used to query for $\pi_i$ or $\pi_{i}^{-1}$. We start by recalling its main components. It stores the elements of $\pi$ in a balanced binary search tree
allowing for splits and joins, for example a red-black tree. Additionally, at each node it stores the size of its subtree and the reverse flag, which,
if set to true, indicates that the subtree rooted at that node should be read in reverse order, that is from right to left, and the signs of its elements
should be flipped. It maintains the following invariant: the in-order traversal of the tree, taking the reverse flags into account,
must give the permutation. To execute a reversal $\rho(i,j)$, we split the tree into three trees: $T_1$ containing $\pi_1, \ldots, \pi_{i-1}$,
$T_2$ containing $\pi_{i},\ldots, \pi_j$, and $T_3$ containing $\pi_{j+1}, \ldots, \pi_n$. Next, we flip the reverse flag of the root of $T_2$,
and finally join $T_1$ and $T_2$ and then the resulting tree and $T_3$. It is easy to see that the resulting tree satisfies the invariant. Via
a standard approach, the update requires to perform $\Oh(\log n)$ rotations that touch $\Oh(\log n)$ nodes. 

We augment the tree with shortcut pointers that slow down the updates but make the queries faster. For a node~$v$ of depth $d$, the shortcut
points to its ancestor $w$ of depth $\max\{0,d-\lfloor \varepsilon\log \log n \rfloor\}$ and stores the number of reverse flags
on the path from $v$ to $w$ that are set to true. To compute $\#(i)$, we start from the node containing $i$ and follow the shortcuts to the
root aggregating the number of the reverse flags that are set to true in $\Oh(\log n / \varepsilon\log \log n) = \Oh(\log n / \log \log n)$ time,
as the depth of a node in a red-black tree on $n$ elements is bounded by $\Oh(\log n)$. To perform a reversal, we use the algorithm described in the previous paragraph
and then update the shortcuts for all nodes $u$ of the tree such that one of the nodes between $u$ and the endpoint of the shortcut for $u$ was
modified by a rotation. 

We update the shortcuts in $\Oh(\log^{1+\varepsilon} n)$ total time in the following way. Consider a node $x$ modified by a rotation.
Let $T_x$ be the subtree rooted at $x$. We must update the shortcuts for all nodes in $T_x$ that have depth at most $\varepsilon\log \log n$.
The number of such nodes is $\Oh(2^{\varepsilon\log \log n}) = \Oh(\log^{\varepsilon} n)$. The update consists of two steps. First, for each of
the $\varepsilon\log \log n$ lowest ancestors of $x$, we compute the number of reverse flags that are set to true on the path from the ancestor
to $x$ in $O(\varepsilon\log \log n)$ total time. Second, we run a depth-limited DFS from $x$ that traverses all nodes of $T_x$ that have depth
at most $\varepsilon\log \log n$. During the DFS, we store  
the depth of the currently visited node $v$ and the number of reverse flags that are set to true on the path from $x$ to $v$, and update the shortcut
from $v$ by retrieving the appropriate ancestor of~$x$. As the DFS takes time linear in the number of visited nodes, the total time of the two steps
is $\Oh(\log^{\varepsilon} n)$. Hence, the update takes $\Oh(\log^{1+\varepsilon} n)$ across all $\Oh(\log n)$ nodes modified by a rotation.
\end{proof}

By choosing $0 < \varepsilon < 1$ we obtain $\Oh (\log^{1+\varepsilon} ) = \Oh(\log^2 n / \log \log n)$ update time.
Recall that in Theorem~\ref{thm:main_result}, operation~\ref{item:get_oriented_i} required $\Oh(\log n)$ queries about $\#(i)$, so applying the above data structure gives us the following result:

 \begin{corollary}
 There exists an algorithm supporting operations~\ref{item:get_oriented_i}-\ref{item:remove} from Theorem~\ref{thm:interface_on_permutation} on a permutation $\pi$ on $n$ elements in $\Oh(\log^2 n/ \log \log n)$ amortized time.
\end{corollary}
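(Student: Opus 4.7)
The plan is to take the algorithm of Theorem~\ref{thm:main_result} verbatim and substitute two of its internal components with faster alternatives, then verify that the amortized accounting still works out. First I would replace the fully dynamic connectivity structure of Holm et al.\ with the structure of Wulff-Nielsen~\cite{WulffNielsen}, which supports edge insertions and deletions in $\Oh(\log^2 n / \log \log n)$ amortized time and, crucially for us, can still be configured to expose a spanning forest $F$ of $G(\pi)$ stored in a link-cut tree with the preferred-path representation required by the binary search in Theorem~\ref{thm:main_result}. Second, I would replace the balanced BST maintaining $\pi$ from Fact~\ref{fact:bst_for_maintaining_pi} with the augmented structure of Lemma~\ref{lm:number_of_reversals} instantiated with any $\varepsilon\in(0,1)$, so that reversals cost $\Oh(\log^{1+\varepsilon} n) = \Oh(\log^2 n / \log \log n)$ while a query for $\#(i)$ (and hence, via Lemma~\ref{le:same_blue}, a test of whether a node lies in the blue component of $0$) takes $\Oh(\log n / \log \log n)$ time.

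With these substitutions, operation~\ref{item:apply_reversal} (a reversal) updates $\pi$ in the new structure and triggers $\Oh(1)$ insertions/deletions of blue edges in $G(\pi)$, each costing $\Oh(\log^2 n / \log \log n)$; operation~\ref{item:remove} removes the two red edges $r_i, r_i'$ at the same cost. For operation~\ref{item:get_oriented_i}, I would rerun the argument of Theorem~\ref{thm:main_result} unchanged: call $access(0)$ and $access(0')$ to force the unique spanning-tree path $P$ between them to span at most two preferred paths of depth $\Oh(\log n)$, then descend in the two biased 2-3 trees, spending $\Oh(1)$ time per level plus a single sign query on $\pi$ to decide which side contains an active red edge. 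By Lemma~\ref{lm:number_of_reversals} each such sign query costs $\Oh(\log n / \log \log n)$, so the total cost of the search is $\Oh(\log n) \cdot \Oh(\log n / \log \log n) = \Oh(\log^2 n / \log \log n)$, matching the edge-update cost.

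The main obstacle, and the only point that is not purely bookkeeping, is checking that the faster connectivity structure of~\cite{WulffNielsen} indeed retains a link-cut-tree representation of its spanning forest compatible with the preferred-path mechanism used by the binary search; the paragraph preceding the corollary asserts exactly this compatibility, and I would simply invoke it. Once that is granted, summing the four amortized bounds gives $\Oh(\log^2 n / \log \log n)$ per operation, which is what the corollary claims.
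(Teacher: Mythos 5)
Your proposal is correct and follows essentially the same route as the paper: substitute the Wulff-Nielsen connectivity structure, substitute the $\#(i)$-augmented balanced BST from Lemma~\ref{lm:number_of_reversals} with $\varepsilon \in (0,1)$, and observe that the $\Oh(\log n)$-step binary search of Theorem~\ref{thm:main_result} now costs $\Oh(\log n)\cdot\Oh(\log n/\log\log n)$ while every edge update and reversal costs $\Oh(\log^2 n/\log\log n)$. Your flagged concern about the link-cut-tree compatibility of the faster connectivity structure is also the one point the paper itself asserts rather than proves, so invoking that assertion matches the paper's treatment.
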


To conclude, by combining the above corollary, Fact~\ref{fact:bst_for_maintaining_pi} and Theorem~\ref{thm:interface_on_permutation} we
obtain an algorithm sorting a signed permutation by reversals in $\Oh(n\log^2 n/ \log \log n)$ time.

\bibliographystyle{plainurl}
\bibliography{biblio}

\appendix

\section{Proof of Theorem~\ref{thm:Tannier_alg_interface}}\label{se:proof_Tannier_alg}
The goal of this section is to establish correctness of  Algorithm~\ref{alg:TBS_orig}.
We will first analyse its slower version, see Algorithm~\ref{alg:TBS_simplified}.
Then, we will consider the faster version, see Algorithm~\ref{alg:TBS_faster}, and
show that the two algorithms are in fact equivalent.
Finally, we will notice that Algorithm~\ref{alg:TBS_faster} is a reformulation of Algorithm~\ref{alg:TBS_orig} that avoids the goto statement.

Consider the following algorithm. We will first show that, given a graph with no non-singleton all-white components,
it construct a sequence of toggles that makes all nodes white and removes all edges from the graph. We note that with every node
$v$ in any sequence we store the neighborhood of $v$ in the graph before toggling $v$. This allows us to undo $\toggle(v)$ in the future.

\begin{algorithm}[h]
\begin{algorithmic}[1]
\Function{Process}{graph $G$ with no non-singleton all-white connected components}
\State $S:=()$
\State $G':=G$
\While{there is a black node $v$ in $G'$}
\State apply $\toggle(v)$ to $G'$
\State $S:= S,v$
\EndWhile
\While{there is a non-singleton all-white connected component in $G/S$}\label{li:main_loop}
\State $U:=$ set of nodes from non-singleton all-white connected components in $G/S$ \label{li:define_U}
\State $S_{1}, S_{2} := S$ for the longest $S_{1}$ s.t. $G/S_1$ has a node of $U$ in a not-all-white component \label{li:splitS}
\State   $G_1:= G/S_1$\label{li:define_g1}
\State   $S_3:= ()$\label{li:set_s3_empty}
\While{there is a black node $v$ from $U$ in $G_1$}\label{li:loop_add_nodes_to_s3}
\State apply $\toggle(v)$ to $G_1$\label{li:toggle_v}
\State $S_3:=S_3,v$ \label{li:loop_add_nodes_to_s3_end}
\EndWhile
\If{$S_2[1]$ is white in $G_1$}\label{li:check}
\State  remove the last element $w$ from $S_3$ \label{li:remove_from_s3}
\State undo $\toggle(w)$ in $G_1$\label{li:undo_toggle_w}
\EndIf
\State $S:=S_1,S_3,S_2$ \label{li:set_new_S}
\EndWhile
\State \Return $S$ 
\EndFunction
\end{algorithmic}
\caption{}
\label{alg:TBS_simplified}
\end{algorithm}

The overall structure of the proof is as follows. 
We start with some technical lemmas concerning the properties of the algorithm.
Then, we establish that in line~\ref{li:splitS} such a split indeed exists, and further $v_{mid}=S_2[1]$ does exist.
Next, we show that in line~\ref{li:remove_from_s3} the sequence $S_{3}$ is non-empty.
In fact, its length is always at least 2, which immediately gives us that the whole algorithm 
terminates in a finite number of steps (in fact, at most linear in the number of nodes).
To prove the correctness, we call a sequence $S$ of toggles \emph{valid} for a graph~$G$ if we can toggle nodes of
$G$ in the order defined by $S$, that is $s_i$ is a black node in $G/s_1/\ldots/s_{i-1}$ for every $1\leq i\leq k$.
First, we will show that in line~\ref{li:main_loop} $S$ is valid for $G$. Next, we will argue that $G/S$ does not contain any black nodes.
Finally, we will prove that the algorithm terminates in a finite number of steps (in fact, at most linear in the number of nodes).
By condition in line~\ref{li:main_loop}, this implies that the found sequence of toggles makes all nodes white and isolated
as required.

Within a single iteration of the main while loop in lines \ref{li:main_loop}-\ref{li:set_new_S} of Algorithm~\ref{alg:TBS_simplified}, let $v_{mid}=S_2[1]$, $\OO$ be the set of nodes in not-all-white components of $G_1/v_{mid}$, and $L = \NN(v_{mid}) \cap \OO$.

\begin{claim}\label{cl:claim1}
Before line~\ref{li:loop_add_nodes_to_s3} of Algorithm~\ref{alg:TBS_simplified}, $U$ consists of all nodes in the non-singleton all-white connected components of $G_1/v_{mid}$. 
\end{claim}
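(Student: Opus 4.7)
The plan is to prove both inclusions between $U$ and the set of nodes in non-singleton all-white connected components of $G_1/v_{mid}$, with the maximality condition defining $S_1$ providing the main leverage. First I would verify that $v_{mid}$ is well defined: since $U$ is the set of nodes in non-singleton all-white components of $G/S$, the full sequence $S$ itself has no $U$-node in a not-all-white component, which forces $S_1$ to be a strict prefix of $S$ and hence $S_2$ to be nonempty.

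Let $H_0 = G_1$ and, for $j \geq 1$, let $H_j$ denote the graph obtained from $G_1$ after toggling the first $j$ elements of $S_2$; so $H_1 = G_1/v_{mid}$ and $H_{|S_2|} = G/S$. By the choice of $S_1$, for every $j \geq 1$ the graph $H_j$ has no node of $U$ in a not-all-white component. The central technical step I would isolate is the following \emph{component-preservation lemma}: for every $j \geq 1$, any all-white connected component $C$ of $H_j$ is completely unaffected by $\toggle(v)$ for $v := S_2[j+1]$, and in particular survives on exactly the same vertex set as an all-white component of $H_{j+1}$. The argument is short: $v$ is black in $H_j$ (since $S$ is a valid toggle sequence), so $v \notin C$, and any neighbor of $v$ lying in $C$ would force $v$ itself into $C$, contradicting $C$ being all-white. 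Hence $\NN^+(v) \cap C = \emptyset$ in $H_j$, and since $\toggle(v)$ only changes colors and edges inside $\NN^+(v)$, it touches no color in $C$ and no edge incident to $C$.

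With this lemma in hand, both inclusions follow by iteration. For the direction ``non-singleton all-white components of $H_1$ are contained in $U$'': let $C$ be such a component; iterating the lemma from $H_1$ to $H_{|S_2|}$ shows that $C$ is still a non-singleton all-white component of $G/S$, so $C \subseteq U$. For $U \subseteq$ non-singleton all-white components of $H_1$: take $u \in U$. By the maximality of $S_1$, the component $D$ of $u$ in $H_1$ is all-white, so the lemma applies and $D$ persists on the same vertex set as $u$'s component all the way through $H_{|S_2|} = G/S$; since $u$'s component in $G/S$ is non-singleton by definition of $U$, $D$ itself must already be non-singleton in $H_1$.

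The main subtlety I expect is the component-preservation step: one must verify not merely that $C$ does not directly gain or lose an edge, but also that a newly created edge inside $\NN(v)$ cannot connect $C$ to something outside of it. The observation $\NN^+(v) \cap C = \emptyset$ handles all of these cases uniformly, because any edge that $\toggle(v)$ creates or deletes has both endpoints in $\NN^+(v)$ and hence outside $C$, so the reachability relation among nodes of $C$ and between $C$ and $V \setminus C$ is entirely preserved.
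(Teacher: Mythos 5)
Your proof is correct and relies on essentially the same two ingredients as the paper's: that all-white components of $G_1/v_{mid}$ are untouched by any subsequent toggle (your component-preservation lemma, which the paper states tersely as ``its component cannot be affected by any further toggle''), and that the maximality of $S_1$ forces every $U$-node into an all-white component of $G_1/v_{mid}$. You package this as two inclusions via an iterated preservation lemma, while the paper does a direct three-way case analysis on the nodes of $G_1/v_{mid}$ (white isolated, white non-singleton, or in a not-all-white component), but the substance is the same.
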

\begin{proof}
Consider a node $v$ in $G_1/v_{mid}$.
If $v$ is white and isolated in $G_1/v_{mid}$, it will be also white and isolated in $G/S$, so it does not belong to $U$.
If $v$ belongs to a white non-singleton component of $G_1/v_{mid}$, its component cannot be affected by any further toggle and remains white in $G/S$, so $v\in U$.
Finally, $v\in\OO$ (a node that belongs to a not-all-white component) cannot belong to $U$ as in this case $S_1, v_{mid}$ satisfies the conditions of the split and is longer than $S_1$, see line~\ref{li:splitS}. 
\end{proof}

Consequently, the nodes of $G_1$ are $v_{mid}$, some white isolated nodes, and the nodes from $\OO$ and~$U$.
Next, we show the following properties of $G_1$ that are preserved after any sequence of toggles of black nodes from~$U$:

\begin{lemma}\label{le:Lemma1}
Consider the graph $G_{1}$ before line~\ref{li:loop_add_nodes_to_s3} of Algorithm~\ref{alg:TBS_simplified},
and any sequence $w_1,w_2,...,w_k$ of nodes from $U$ such that for $1\leq i\leq k$, $w_i$ is a black node in $G_1/w_1/w_2/.../w_{i-1}$. The graph $G_1/w_1/w_2/.../w_k$ satisfies each of the following properties:
\begin{enumerate}
\item[(1)] A node of $U$ is black if and only if it is adjacent to $v_{mid}$.
\item[(2)] There are all possible edges between $\NN(v_{mid})\cap U$ and $L$.
\item[(3a)] There is no edge between the nodes of $U \setminus \NN (v_{mid})$ and the nodes outside $U$.
\item[(3b)] There is no edge between the nodes of $\OO \setminus L$ and the nodes outside $\OO$.
\end{enumerate}
The sets $\OO,U$ and $L$ are defined for $G_1$, whereas adjacency and $\NN(v_{mid})$ refer to $G_1/w_1/w_2/.../w_k$.
\end{lemma}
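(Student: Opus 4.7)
The plan is to prove all four properties simultaneously by induction on $k$.

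For the base case $k=0$, the graph is $G_1$ and I interpret it as being obtained from $G_1/v_{mid}$ by the inverse of $\toggle(v_{mid})$, which complements both the colors and the edges inside $\NN^+_{G_1}(v_{mid})$. In $G_1/v_{mid}$ the sets $U$ and $\OO$ lie in disjoint connected components by Claim~\ref{cl:claim1} and the definitions, so no edges run between them there. Property~(1) then follows because a node of $U$ adjacent to $v_{mid}$ in $G_1$ is exactly one whose color was flipped, turning white (in $G_1/v_{mid}$) into black. Property~(2) is read off from the fact that the pairs between $\NN(v_{mid})\cap U$ and $L$ sit entirely inside $\NN^+_{G_1}(v_{mid})$ and carried no edge in $G_1/v_{mid}$, hence carry an edge in $G_1$. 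Properties~(3a) and~(3b) follow because any vertex in $U\setminus \NN(v_{mid})$ or $\OO\setminus L$ lies outside $\NN^+_{G_1}(v_{mid})$, so its incidences in $G_1$ coincide with those in $G_1/v_{mid}$, where by construction they stay within its own component.

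For the inductive step, assume the four properties hold in $H:=G_1/w_1/\ldots /w_{k-1}$ and consider toggling a black $w_k\in U$. My first task is to describe $\NN_H(w_k)$ exactly. From property~(1) at level $k-1$ we get $v_{mid}\in\NN_H(w_k)$; property~(3b) forces $\NN_H(w_k)\cap\OO\subseteq L$; and property~(2), with $w_k\in\NN_H(v_{mid})\cap U$ feeding in, gives $L\subseteq\NN_H(w_k)$. Thus $\NN_H(w_k)=\{v_{mid}\}\cup L\cup N_U$ where $N_U:=\NN_H(w_k)\cap U$. I then verify each property in $H/w_k$ by splitting on whether a vertex belongs to $\NN^+_H(w_k)$. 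For~(1), the color of $u\in U$ and the edge $\{u,v_{mid}\}$ either both flip (when $u\in N_U$) or both stay put, so the equivalence at level $k-1$ survives. Properties~(3a) and~(3b) largely transfer from level $k-1$: vertices in $\OO\setminus L$ stay outside $\NN^+_H(w_k)$, so their incidences are untouched, and the newly white vertices of $U$ have their incidences outside $U$ controlled by the strengthening used to handle~(2) below.

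The main obstacle is the inductive step for property~(2), because the statement itself mentions only the required edges and is silent about non-edges, yet a toggle converts edges to non-edges and vice versa. I would get around this by threading a stronger auxiliary invariant through the induction: every $u\in U$ is adjacent to either all of $L$ or to none of $L$, and these two alternatives are determined precisely by whether $u\in\NN(v_{mid})$. This strengthened form is what~(2) and~(3a) together supply at each level. The toggle of $w_k$ acts uniformly on the block $N_U\times L$ by flipping every such pair at once, and it simultaneously flips $\{u,v_{mid}\}$ for every $u\in N_U$; consequently both ``$u\in\NN(v_{mid})$'' and the ``all vs.\ none'' status of $u$'s incidence to $L$ flip in lockstep, re-establishing the strengthened invariant (and therefore~(2)) in $H/w_k$.
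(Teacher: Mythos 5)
Your proof is correct and follows essentially the same route as the paper's: induction on $k$, with the base case derived from Claim~\ref{cl:claim1} (your ``inverse toggle'' lens is an equivalent rephrasing of the paper's direct contradiction arguments), and the inductive step driven by pinning down $\NN_H(w_k)=\{v_{mid}\}\cup L\cup N_U$ from properties (1), (2), (3b) and then tracking how the local complementation moves vertices between the cases. The ``strengthened invariant'' you introduce for property~(2) is, as you yourself note, exactly the conjunction of~(2) and~(3a); the paper invokes the same pair in its case analysis, so this is not an extra ingredient but an explicit articulation of the same argument.
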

\begin{proof}

Induction on $k$, following from the properties of the $\toggle$ operation.

\FIGURE{h}{.7}{vmid}{Full nodes denote black nodes, empty nodes denote white nodes, and square nodes denote nodes which can be either black or white.
$\OO$ is the set of nodes in the not-all-white components of $G_1/v_{mid}$, and $U$ is the set of nodes from the non-singleton all-white connected components in $G/S$.
Left: $v_{mid}$ and the sets $O,L,U$.
Right: invariants (1)-(3b).
$\times$ denotes the edges that are not present in the graph. }

\noindent
\textbf{Base case} ($k=0$). We prove each of the properties separately:
\begin{enumerate}
\item[(1)] Otherwise in $G_1/v_{mid}$ we would have a black node from $U$, which contradicts Claim~\ref{cl:claim1}.
\item[(2)] Otherwise in $G_1/v_{mid}$ we would have an edge from a node $u \in  U$ to a node from $\OO$, which means that $u$ is in a not-all-white component of $G_1/v_{mid}$, which contradicts Claim~\ref{cl:claim1}.
\item[(3a)] If in $G_1$ there was an edge from a node $u \in  U \setminus \NN (v_{mid})$ to a node outside $U$, in $G_1/v_{mid}$ this edge will be preserved and $u$ will be part of a not-all-white component, which contradicts Claim~\ref{cl:claim1}.
\item[(3b)] If in $G_1$ there was an edge from a node $t \in  \OO \setminus L$ to a node $u$ outside $\OO$, then $u\in U$ as $t\notin \NN(v_{mid})$ and hence $u\ne v_{mid}$. 
In $G_1/v_{mid}$ this edge will be preserved and $u$ will be part of a not-all-white component, which contradicts Claim~\ref{cl:claim1}.
\end{enumerate}
\noindent
\textbf{Induction step} ($k-1 \rightarrow k$).
Let $G'=G_1/w_1/w_2/.../w_{k-1}$ and $G''=G'/w_k$.
By definition of $w_k$ and (1) for~$G'$, the node $w_k$ is a neighbor of $v_{mid}$ in~$G'$.
By (2) for $G'$, the node $w_k$ is connected to all nodes from $L$ in~$G'$.

\begin{enumerate}
\item[(1)] Nodes not adjacent to $w_k$ in $G'$ are not affected by $\toggle(w_k)$ so the claim follows for them by induction.
Consider a neighbor $v$ of $w_k$.
If $v\in \NN(v_{mid})$ then by (1) $v$ is black. Hence $v$ is white in $G''$ and $v\notin \NN(v_{mid})$.
If $v\notin \NN(v_{mid})$ then by (1) $v$ is white. Hence $v$ is black in $G''$ and $v\in \NN(v_{mid})$.
\item[(2)]  There are two cases in which $v\in \NN(v_{mid})$ in $G''=G'/v_{mid}$:
\begin{itemize}
 \item $v\in \NN(w_k), v\notin \NN(v_{mid})$: by (3a), in $G'$ the node $v$ is connected to none of the nodes from $\OO$. As $w_k$ is connected to $v_{mid}$ and all nodes from $L$, in $G''$ the node $v\in \NN(v_{mid})$ and is connected to all nodes from $L$, so (2) holds.
 \item $v\notin \NN(w_k), v\in \NN(v_{mid})$: $v$ is not affected by $\toggle(w_k)$ in $G'$, so $v\in \NN(v_{mid})$ in $G''$ and (2) follows by induction.
\end{itemize}
\item[(3a)]  There are two cases in which $v\notin \NN(v_{mid})$ in $G''=G'/v_{mid}$:
\begin{itemize}
 \item $v\in \NN(w_k)$, $v\in \NN(v_{mid})$: in $G'$ the node $v$ is connected to all nodes from $L$ by (2), but no node from $\OO\setminus L$ by (3b). Hence $v\notin \NN(v_{mid})$ in $G''$ and $v$ is not connected to any node from~$\OO$, so (3a) holds.
 \item $v\notin \NN(w_k), v\notin \NN(v_{mid})$: $v$ is not affected by $\toggle(w_k)$, so in $G''$ the node $v\notin \NN(v_{mid})$ and (3a) follows by induction.
\end{itemize}

\item[(3b)] By (3b) for $G'$, nodes from $\OO\setminus L$ are not neighbors of $w_k$ in $G'$, so they are not affected by $\toggle(w_k)$ and (3b) holds also for $G''$. \qedhere
\end{enumerate}
\end{proof}

Now note that after passing the check in line~\ref{li:main_loop} of Algorithm~\ref{alg:TBS_simplified}, the set $U$ is non-empty.
In line~\ref{li:splitS}, we can always find such a split, because in particular $S_1=(), S_2=S$ is a valid split, as $G$ does not have non-singleton white components. Furthermore, by definition of the split, in line~\ref{li:check}  we have $S_2\ne()$, so $v_{mid} = S_{2}[1]$ does exist.
Moreover, by maximality of $S_1$:
\begin{claim}\label{cl:s2_has_only_o}
$S_2[2\ldots]$ contains only nodes from~$\OO$. 
\end{claim}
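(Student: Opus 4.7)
The plan is to rule out, for each element of $S_2[2\ldots]$, every category of nodes of $G_1/v_{mid}$ except $\OO$. By Claim~\ref{cl:claim1} and the remark following it, the node set of $G_1$ (equivalently of $G_1/v_{mid}$) partitions into four classes inside $G_1/v_{mid}$: the node $v_{mid}$ itself (which is white and isolated after the toggle), the remaining white isolated nodes, the set $U$ of nodes in non-singleton all-white components, and the set $\OO$ of nodes in not-all-white components. Write $S_2 = (v_{mid}, s_2, \ldots, s_m)$ and, for each $i \geq 2$, set $H_i := G/S_1/v_{mid}/s_2/\cdots/s_{i-1}$. Since $S$ is valid for $G$, the node $s_i$ is black in $H_i$, so it suffices to show that every node in the first three classes is white in $H_i$.

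The main step is the exclusion of $U$, which is the only place where the maximality of $S_1$ is used. For any $i \geq 2$, the sequence $S_1, v_{mid}, s_2, \ldots, s_{i-1}$ is a prefix of $S$ strictly longer than $S_1$; by the choice of $S_1$ in line~\ref{li:splitS}, no node of $U$ lies in a not-all-white connected component of $H_i$. In particular, every node of $U$ is white in $H_i$, so $s_i \notin U$.

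For the remaining two classes, I would observe that an isolated white node stays white and isolated under every subsequent $\toggle$: having no neighbors, it never belongs to $\NN^+(w)$ of a later black node $w$, and hence both its color and its empty edge set are preserved. Applied to $v_{mid}$ and to the other white isolated nodes of $G_1/v_{mid}$, this shows that they all remain white in $H_i$, so $s_i$ cannot be any of them either, which leaves $s_i \in \OO$. I do not expect any real obstacle here; the argument is a short bookkeeping step recording the consequence of the maximality of $S_1$ together with the stability of isolated nodes under toggles. The only minor point to get right is the distinction between ``in an all-white component'' (used in the definition of $U$ via the maximality of $S_1$) and ``white and isolated'' (the singleton all-white components), which is handled automatically by Claim~\ref{cl:claim1}.
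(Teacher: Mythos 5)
Your proposal is correct and rests on the same two ingredients as the paper's proof: the maximality of $S_1$ chosen in line~\ref{li:splitS}, and the persistence of white isolated nodes under further toggles. The paper phrases this as a contradiction argument that picks the last element of $S_2[2\ldots]$ not in $\OO$ and exhibits a strictly longer admissible prefix, whereas you argue directly, for each $i\geq 2$, that $s_i$ is black in $H_i$ while every node outside $\OO$ is white in $H_i$ — but these are the same observation, just packaged differently.
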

\begin{proof}
Assume otherwise, and let $S[i]$ be the last node in $S_{2}[2\ldots]$ that does not belong to $\OO$.
Then it must belong to $U$, as every other node is white already in $G_{1}/v_{mid}$. Thus, we could have split
$S$ into $S_{1},v_{mid},S_{2}[2\ldots (i-1)]$ and $S[i \ldots]$.
\end{proof}
Now we show that Algorithm~\ref{alg:TBS_simplified} is well-defined and terminates.

\begin{lemma}\label{le:lemma2}
After the while loop in lines \ref{li:loop_add_nodes_to_s3}-\ref{li:loop_add_nodes_to_s3_end} of Algorithm~\ref{alg:TBS_simplified}, $S_3$ contains at least 2 nodes. 
\end{lemma}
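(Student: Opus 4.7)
The plan is to proceed by contradiction, ruling out in turn that the inner loop performs zero or one iteration. To rule out $|S_3|=0$, the maximality of the split at line~\ref{li:splitS} gives us some $u\in U$ lying in a not-all-white connected component of $G_1$; if $u$ is black, invariant (1) of Lemma~\ref{le:Lemma1} puts it into $B:=\NN(v_{mid})\cap U$, and otherwise invariant (1) forces $u\notin \NN(v_{mid})$, whence invariant (3a) confines the $G_1$-component of $u$ to $U$, and the black node witnessing the ``not-all-whiteness'' of this component is again, by (1), an element of $B$.

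To rule out $|S_3|=1$, I would suppose that after the single toggle $\toggle(w_1)$ no black $U$-node remains in $G_1/w_1$. Because $w_1\in \NN(v_{mid})$ in $G_1$, $\toggle(w_1)$ complements exactly the edges $\{v_{mid},x\}$ for which $x\in \NN(w_1)$ in $G_1$, so a direct computation (using that $v_{mid}\notin U$) yields
\[
\NN(v_{mid})\cap U \;=\; \bigl(B\setminus\{w_1\}\bigr)\;\triangle\;\bigl(\NN(w_1)\cap U\bigr)\qquad\text{in }G_1/w_1,
\]
where on the right side the neighbourhood is taken in $G_1$. The contradiction hypothesis forces the left side to be empty, i.e.\ $B\setminus\{w_1\}=\NN(w_1)\cap U$ in $G_1$. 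To break this equality, I would use that $w_1$ belongs to a non-singleton all-white component $C\subseteq U$ of $G_1/v_{mid}$, so it has at least one neighbour $u\in C$ in $G_1/v_{mid}$. A case split on whether $u\in \NN(v_{mid})$ in $G_1$ finishes the argument: if yes, both $w_1,u$ lie in $\NN^+(v_{mid})$, hence the edge $\{w_1,u\}$ was flipped by $\toggle(v_{mid})$, yielding $u\in B\setminus\{w_1\}$ but $u\notin \NN(w_1)$ in $G_1$; if no, the edge was not flipped, so $u\in \NN(w_1)\cap U$ in $G_1$ but $u\notin B$. Either way the two sides of the equality disagree at $u$.

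I expect the main obstacle to be the bookkeeping: invariants (1)--(3b) of Lemma~\ref{le:Lemma1} live in the ``current'' graph $G_1/w_1/\ldots/w_k$, the sets $U$, $\OO$, $L$, $B$ are defined via $G_1$ or $G_1/v_{mid}$, and the symmetric-difference identity above mixes all three graphs, so one has to be careful not to confuse the adjacency or colour of a given node across graphs.
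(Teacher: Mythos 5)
Your argument is correct and, for the $|S_3|=1$ subcase, genuinely departs from the paper's proof. Both proofs derive the same intermediate fact from invariant~(1) of Lemma~\ref{le:Lemma1}: the paper states it as $\NN^+(w)\cap U=\NN(v_{mid})\cap U$ in $G_1$, which is exactly your equality $B\setminus\{w_1\}=\NN(w_1)\cap U$ (your symmetric-difference computation is sound, and $w_1\in B$ makes the two forms coincide). From there the paths diverge. The paper additionally invokes invariants~(2) and~(3b) to pin down $w$'s neighbours in $\OO$, concluding that $\NN^+(w)=\NN^+(v_{mid})$ in $G_1$, hence $w$ is white and isolated in $G_1/v_{mid}$ and therefore in $G/S$, contradicting $w\in U$. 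You avoid~(2) and~(3b) altogether: using Claim~\ref{cl:claim1} to place $w_1$ in a non-singleton all-white component of $G_1/v_{mid}$, you pick a $U$-neighbour $u$ of $w_1$ there, and a two-case check on whether $u\in\NN(v_{mid})$ in $G_1$ shows the set equality fails at $u$. This is more economical, stays entirely inside $U$, and makes clear that the non-singleton hypothesis on the component is the real driver. For $|S_3|=0$ your route also differs slightly: the paper argues $\toggle(v_{mid})$ cannot touch the $U$-components (no $U$-neighbours by~(1)) so $G_1/v_{mid}$ still has a $U$-node in a not-all-white component, contradicting the maximality of $S_1$; you instead exhibit a black node of $U$ directly, contradicting that the loop performed no iterations. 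Your phrase that~(3a) ``confines the $G_1$-component of $u$ to $U$'' is slightly overstated as written, since nodes of $U\cap\NN(v_{mid})$ may have edges leaving $U$; the accurate statement is that any walk from $u$ that leaves $U\setminus\NN(v_{mid})$ must, by~(3a), first enter $U\cap\NN(v_{mid})$, which by~(1) is black — and such a step must occur because the component is not all-white. With that reading the argument goes through.
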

\begin{proof}
If $|S_3|=0$ and there are no black nodes from $U$ in $G_1$, by Lemma~\ref{le:Lemma1}(1) there are no neighbors of $v_{mid}$ in $U$ and by (3) there are no edges from $U$ to $\OO$.
Therefore, the connected components of $U$ are not affected by $\toggle(G_1,v_{mid})$ and $G_1$ has a node of $U$ in a not-all-white component if and only if $G_1/v_{mid}$ has a node of $U$ in a not-all-white component, which contradicts the maximality of $S_1$ in line~\ref{li:splitS}.

If $|S_3|=1$, there is a black node $w\in U$ such that $G_1/w$ has no black nodes in $U$.
By Lemma~\ref{le:Lemma1}(1), the node $w\in \NN(v_{mid})$ and $\NN^+(w)\cap U = \NN(v_{mid})\cap U$ as there are no black nodes in $U$ in $G_1/w$.
Next, by Lemma~\ref{le:Lemma1}(2), $w$ is adjacent to all nodes from $L$ and by Lemma~\ref{le:Lemma1}(3b) adjacent to no nodes from $\OO\setminus L$. Therefore, in $G_1/v_{mid}$ the node $w$ is white and isolated which is the case also for $G_1/S_2=G/S$. Hence $w\notin U$, a contradiction.
\end{proof}

In particular, in line~\ref{li:remove_from_s3} we have $S_{3} \ne()$, so we can indeed remove the last element from $S_{3}$.
Further, in line~\ref{li:set_new_S} we have $|S_1,S_3,S_2|>|S|$, so each iteration of the main while loop results in a longer $S$.
This implies that the algorithm terminates, as each element of $S$ makes at least one node of $G/S$ isolated.

To show the next property, let $W = L \cup \{v_{mid}\}$ and let $G|_Y$ denote the subgraph of graph $G$ induced by a set~$Y$ of nodes.

\begin{claim}\label{cl:Claim2}
Before line~\ref{li:check} of Algorithm~\ref{alg:TBS_simplified}, the subgraph $(G_1/S_3)|_W$ equals $G_1|_W$ if $|S_3|$ is even.
If $|S_3|$ is odd, the subgraph of $(G_1/S_3)|_W$ is the local complement of $G_1|_W$. 
\end{claim}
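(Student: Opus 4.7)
The plan is to prove by induction on $|S_3|$ that each single toggle performed in the inner loop of lines~\ref{li:loop_add_nodes_to_s3}--\ref{li:loop_add_nodes_to_s3_end} replaces the induced subgraph on $W$ by its local complement (i.e.\ edges within $W$ are complemented and colours of the nodes of $W$ are negated); the claim then follows by a parity argument on $|S_3|$.

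First, I would observe that $w_i \notin W$ for every node $w_i$ appended to $S_3$. Indeed, each such $w_i$ lies in $U$, and the three sets $\{v_{mid}\}$, $U$ and $\OO$ are pairwise disjoint: by Claim~\ref{cl:claim1} they correspond to the three distinct types of components of $G_1/v_{mid}$ (the black toggled node, the nodes of non-singleton all-white components, and the nodes of not-all-white components). Since $W = \{v_{mid}\} \cup L \subseteq \{v_{mid}\} \cup \OO$, we conclude $w_i \notin W$.

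The main step is to verify that, at the moment $w_i$ is toggled, all of $W$ already lies in $\NN^+(w_i)$ inside the current graph $G' := G_1/w_1/\dots/w_{i-1}$. Since $w_i$ is a black node from $U$ in $G'$, Lemma~\ref{le:Lemma1}(1) applied to the valid prefix $w_1,\dots,w_{i-1}$ gives $w_i \in \NN_{G'}(v_{mid})$, hence $v_{mid} \in \NN^+_{G'}(w_i)$. Because in addition $w_i \in \NN_{G'}(v_{mid}) \cap U$, Lemma~\ref{le:Lemma1}(2) forces $L \subseteq \NN_{G'}(w_i)$, and together $W = L \cup \{v_{mid}\} \subseteq \NN^+_{G'}(w_i)$. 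Combined with $w_i \notin W$, the operation $\toggle(w_i)$ therefore complements every pair of nodes inside $W$ and flips the colour of every node of $W$, so $(G'/w_i)|_W$ is exactly the local complement of $G'|_W$.

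Iterating this observation along $S_3$, a direct parity argument concludes: after an even number of such operations the induced subgraph on $W$ equals $G_1|_W$, whereas after an odd number of operations it equals its local complement. The only nontrivial ingredient is the containment $W \subseteq \NN^+_{G'}(w_i)$ at every intermediate step, which is precisely where properties~(1) and~(2) of Lemma~\ref{le:Lemma1} are essential; everything else is routine bookkeeping.
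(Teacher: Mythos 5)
Your proof is correct and follows essentially the same approach as the paper: both use Lemma~\ref{le:Lemma1}(1)--(2) to show that each toggled node $w_i$ is adjacent to all of $W = L \cup \{v_{mid}\}$, so each toggle locally complements $W$, and then conclude by a parity argument. You supply a bit more detail than the paper (notably, the explicit observation that $w_i \notin W$), but the key ideas are identical.
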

\begin{proof}
By induction on $|S_3|$.
By Lemma~\ref{le:Lemma1}, every $w_i\in S_3$ is adjacent to $v_{mid}$ and all nodes from~$L$. Therefore, every operation $\toggle(w_i)$ complements the subgraph of the current graph induced by $W$ and the claim follows. 
\end{proof}

We note that before we remove the last node $w_{last}$ in line~\ref{li:remove_from_s3} of the algorithm, $w_{last}$ has exactly the same neighborhood as $v_{mid}$, so removing it and next toggling $v_{mid}$ gives exactly the same result:

\begin{claim}\label{cl:neighborhood_of_vmid}
Consider the graph $G_{1}$ before line~\ref{li:loop_add_nodes_to_s3} of Algorithm~\ref{alg:TBS_simplified},
and let $S_3,w_{last}$ be a valid sequence of nodes from $U$ for $G_1$ such that there is no black node from $U$ in $G_1/S_3/w_{last}$.
If $|S_3|$ is even, in $G_1/S_3$ we have $\NN^+(w_{last})=\NN^+(v_{mid})$.
If $|S_3|$ is odd, $v_{mid}$ has no neighbors from $U$ in $G_1/S_3/w_{last}$.
\end{claim}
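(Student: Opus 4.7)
My plan is to prove the claim by case analysis on the parity of $|S_3|$.

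The odd case is a direct consequence of Lemma~\ref{le:Lemma1}. The sequence $S_3, w_{last}$ is itself a valid sequence of toggles from $U$ for $G_1$ (of length $|S_3|+1$), so property~(1) of the lemma applies to $G_1/S_3/w_{last}$ and says that a node of $U$ is black there if and only if it is adjacent to $v_{mid}$. Since by hypothesis no $U$-node is black in $G_1/S_3/w_{last}$, no $U$-node is adjacent to $v_{mid}$ there, which is exactly the statement.

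For the even case I need to show $\NN^+(w_{last})=\NN^+(v_{mid})$ in $G_1/S_3$, and I would verify this class by class over the partition of the nodes of $G_1$ into $\{v_{mid}, w_{last}\}$, $L$, $\OO\setminus L$, $U\setminus\{w_{last}\}$, and the set $Z$ of nodes that lie in singleton white components of $G_1/v_{mid}$ other than $v_{mid}$ itself. The pair $\{v_{mid}, w_{last}\}$ is immediate from Lemma~\ref{le:Lemma1}(1). For $L$, Claim~\ref{cl:Claim2} with $|S_3|$ even gives $(G_1/S_3)|_W = G_1|_W$ (where $W=L\cup\{v_{mid}\}$), so $v_{mid}$ keeps $L$ as its $L$-neighborhood; Lemma~\ref{le:Lemma1}(2) gives the same for $w_{last}$. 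Lemma~\ref{le:Lemma1}(3b) excludes neighbors in $\OO\setminus L$ for both $v_{mid}$ and $w_{last}$. For $U$, I apply Lemma~\ref{le:Lemma1}(1) to $G_1/S_3/w_{last}$ and track how $\toggle(w_{last})$ complements the edges incident to $v_{mid}$ inside $\NN^+(w_{last})$; the hypothesis then translates into the set equality $\NN_{G_1/S_3}(v_{mid})\cap U = \NN^+_{G_1/S_3}(w_{last})\cap U$.

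The main obstacle is the class $Z$, for which Lemma~\ref{le:Lemma1} says nothing directly. I would split $Z$ into nodes already isolated in $G_1$ (which remain isolated under every toggle of a $U$-node and so belong to neither $\NN(v_{mid})$ nor $\NN(w_{last})$) and the set $Z_U=Z\cap\NN_{G_1}(v_{mid})$ of ``universal neighbors'', for which a short edge count shows $\NN^+_{G_1}(z)=\NN^+_{G_1}(v_{mid})$. The heart of the argument is then a strengthened invariant proved by induction on the length $m$ of an arbitrary valid toggle sequence $w_1,\ldots,w_m$ of black $U$-nodes for $G_1$: in $G_1/w_1/\ldots/w_m$, for every $z\in Z_U$, one has $\NN^+(v_{mid})=\NN^+(z)$ when $m$ is even and $\NN(v_{mid})=\NN(z)$ when $m$ is odd. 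In the inductive step, a valid next toggle $w_{m+1}$ is by Lemma~\ref{le:Lemma1}(1) adjacent to $v_{mid}$, hence by the inductive hypothesis also adjacent to $z$, so both $v_{mid}$ and $z$ lie in $\NN^+(w_{m+1})$; the general identity ``new $\NN(v) = \NN^+(w_{m+1})\triangle\NN^+(v)$'' for how a toggle updates a neighbor's neighborhood then flips the invariant from one parity to the other. Specialising to $m=|S_3|$ even gives $\NN^+(v_{mid})=\NN^+(z)$ in $G_1/S_3$ for every $z\in Z_U$, so $Z_U\subseteq\NN(v_{mid})$; and since $w_{last}\in\NN(v_{mid})\subseteq\NN^+(z)$ with $w_{last}\neq z$, symmetrically $z\in\NN(w_{last})$. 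Therefore $\NN(v_{mid})\cap Z = \NN(w_{last})\cap Z = Z_U$, and collecting the contributions from every class yields $\NN^+(w_{last})=\NN^+(v_{mid})$ in $G_1/S_3$.
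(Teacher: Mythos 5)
Your proof is correct, and for the odd case and for the classes $\OO$, $U$, and the pair $\{v_{mid},w_{last}\}$ it follows exactly the same route as the paper: Lemma~\ref{le:Lemma1}(1) for the odd case; Claim~\ref{cl:Claim2} together with Lemma~\ref{le:Lemma1}(1)--(3b) to match the closed neighborhoods on $\OO$ and on $U$, plus the immediate observation that $v_{mid}$ and $w_{last}$ are mutually adjacent.

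Where you genuinely go beyond the paper is in the treatment of the leftover class $Z$ of nodes that are white and isolated in $G_1/v_{mid}$ but distinct from $v_{mid}$. The paper's proof silently handles only $\OO$, $U$, and $\{v_{mid},w_{last}\}$, tacitly treating every node of $Z$ as isolated already in $G_1$ and therefore absent from every $\NN^+(\cdot)$. This is correct for $Z\setminus Z_U$, but not a priori for the ``true twins'' $Z_U$ of $v_{mid}$ (black nodes $z$ of $G_1$ with $\NN^+_{G_1}(z)=\NN^+_{G_1}(v_{mid})$, which do become white and isolated after $\toggle(v_{mid})$ yet are not isolated in $G_1$). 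Your parity-preserving induction --- $\NN^+(v_{mid})=\NN^+(z)$ after an even number of $U$-toggles, $\NN(v_{mid})=\NN(z)$ after an odd number --- is correct (the update identity $\NN_{\text{new}}(v)=\NN^+(w)\,\triangle\,\NN^+(v)$ for $v\in\NN(w)$ checks out, and in the odd-parity step one indeed has $z\notin\NN(v_{mid})$, so the symmetric differences line up as you claim). Specialised to even $m=|S_3|$, it gives $\NN(v_{mid})\cap Z=\NN(w_{last})\cap Z=Z_U$, closing the one class the paper does not discuss. So rather than diverging from the paper, your argument is a strictly more careful version of it: same decomposition and same use of Lemma~\ref{le:Lemma1} and Claim~\ref{cl:Claim2}, plus one extra invariant that the paper implicitly assumes away.
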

\begin{proof}
By definition, there is no black node from $U$ in $G_1/S_3/w_{last}$, so by Lemma~\ref{le:Lemma1}(1) $v_{mid}$ has no neighbors from $U$ in $G_1/S_3/w_{last}$ and the claim follows for odd $|S_3|$.

If $|S_3|$ is even, by Claim~\ref{cl:Claim2}, $v_{mid}$ is adjacent to all nodes from $L$ and is black (because it is black in $G_1$). By Lemma~\ref{le:Lemma1}, $\NN^+(w_{last})\cap \OO = \NN^+(v_{mid})\cap \OO =L$.
Recall that $w_{last}$ is adjacent to~$v_{mid}$. Consider a node $v\in U$.
If $v\in \NN^+(w_{last})$, then it is black in $G_1/S_3$, because in $G_1/S_3/w_{last}$ there is no black node from $U$. Therefore, $v\in \NN^+(v_{mid})$, by Lemma~\ref{le:Lemma1}(1).
If $v\notin \NN^+(w_{last})$, then it is white in $G_1/S_3$, because in $G_1/S_3/w_{last}$ there is no black node from $U$, so $v\notin \NN^+(v_{mid})$, by Lemma~\ref{le:Lemma1}(1).
\end{proof}

\begin{lemma}
In line~\ref{li:set_new_S} of Algorithm~\ref{alg:TBS_simplified}, $S_1,S_3,S_2$ is a valid sequence for $G$.
\end{lemma}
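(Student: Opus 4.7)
The plan is to prove validity in three steps: (i) $S_1$ is valid for $G$, (ii) $S_3$ is valid for $G_1 = G/S_1$, and (iii) $S_2$ is valid for $G_1/S_3$. The first two are immediate: $S_1$ is a prefix of the valid sequence $S = S_1, S_2$ for $G$, while $S_3$ is built by repeatedly toggling black nodes from $U$ in $G_1$, and the truncation in line~\ref{li:remove_from_s3} preserves validity as a prefix of a valid sequence. Hence only part (iii) requires work.

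For (iii), I would first show that $v_{mid} = S_2[1]$ is black in $G_1/S_3$. The key observation is that every toggle in $S_3$ is of a black node $w \in U$, and by Lemma~\ref{le:Lemma1}(1) such $w$ is a neighbor of $v_{mid}$ at the moment of toggling, so $\toggle(w)$ flips $v_{mid}$'s color. In Case 1 (the check in line~\ref{li:check} did not fire), $v_{mid}$ is black in $G_1/S_3$ by the test itself. In Case 2, the check fired because $v_{mid}$ was white in $G_1/S_3^{full}$; since $v_{mid}$ was black in $G_1$, the parity argument forces $|S_3^{full}|$ to be odd, so $|S_3| = |S_3^{full}| - 1$ is even and $v_{mid}$ is again black in $G_1/S_3$.

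Next, I would establish that $S_2[2\ldots]$ is valid for $G_1/S_3/v_{mid}$. By Claim~\ref{cl:s2_has_only_o}, $S_2[2\ldots]$ consists entirely of nodes in $\OO$, and the validity of $S_2$ for $G_1$ gives validity of $S_2[2\ldots]$ for $G_1/v_{mid}$. So it suffices to show that the induced subgraph of $G_1/S_3/v_{mid}$ on $\OO$, together with the vertex colors on $\OO$, coincides with that of $G_1/v_{mid}$: then a straightforward induction on prefixes of $S_2[2\ldots]$ propagates the coincidence through every subsequent toggle (the effect of toggling $o \in \OO$ outside $\OO$ is irrelevant, since $S_2[2\ldots]$ only touches $\OO$). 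In Case 2, Claim~\ref{cl:neighborhood_of_vmid} (even case) yields $\NN^+(w_{last}) = \NN^+(v_{mid})$ in $G_1/S_3$, so $G_1/S_3/v_{mid} = G_1/S_3/w_{last} = G_1/S_3^{full}$, and the $\OO$-restriction needs to be compared with that of $G_1/v_{mid}$. In Case 1, I would track edges directly: by Lemma~\ref{le:Lemma1}(2)--(3b), every $\toggle(w)$ with $w \in U \cap \NN(v_{mid})$ has $\NN^+(w) \cap \OO = L$, so it complements only the edges within $L$ and flips the colors of $L$; since $|S_3|$ is even, these effects cancel, and the concluding $\toggle(v_{mid})$ has the same effect on $\OO$ as $\toggle(v_{mid})$ would in $G_1$ alone.

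The main obstacle is the Case 1 bookkeeping: one must carefully combine Claim~\ref{cl:Claim2} (for edges within $W = L \cup \{v_{mid}\}$) with Lemma~\ref{le:Lemma1}(3a)--(3b) (ruling out new edges between $\OO$ and $U$ during $S_3$ or $\toggle(v_{mid})$) to conclude that $G_1/S_3/v_{mid}$ and $G_1/v_{mid}$ agree on the restriction to $\OO$. The symmetric handling of Case 2 via Claim~\ref{cl:neighborhood_of_vmid} is then a clean consequence of the $\NN^+$-equality.
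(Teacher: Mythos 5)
Your proposal is correct and follows essentially the same route as the paper: decompose validity into $S_1$, then $S_3$, then $S_2$, reduce the last part to showing $(G_1/S_3/v_{mid})|_\OO = (G_1/v_{mid})|_\OO$, and establish this via Claim~\ref{cl:Claim2} and Lemma~\ref{le:Lemma1}(3a)--(3b), with line~\ref{li:check} guaranteeing that $v_{mid}$ is black in $G_1/S_3$ (equivalently, $|S_3|$ is even). The Case~1/Case~2 split is an unnecessary detour, since in both cases $|S_3|$ is even and $v_{mid}$ is black in $G_1/S_3$, so your ``Case 1 bookkeeping'' already covers both; the Claim~\ref{cl:neighborhood_of_vmid} detour in Case~2 only reduces back to a comparison of the same form and so buys nothing.
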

\begin{proof}
By definition of the split $S=S_1,S_2$ and Lemma~\ref{le:Lemma1}, toggling the nodes $S_2[1\ldots]$ affects only the subgraph of $(G_1/v_{mid})|_\OO$, because there are no edges from $\OO$ to outside $\OO$ and all other nodes are either isolated or in all-white connected components.

Clearly, $S_3$ is a valid sequence for $G_1$.
As $v_{mid}$ is black in $G_1$ (toggling it is the first operation in $S_2$), by Claim~\ref{cl:Claim2} the subgraph $(G_1/S_3)|_{W} = G_1|_{W}$ if and only if $v_{mid}$ is black in $G_1/S_3$.
In line~\ref{li:check} of Algorithm~\ref{alg:TBS_simplified}, we ensure that $v_{mid}$ is black in $G_1/S_3$, so then $S_3,v_{mid}$ is a valid sequence for $G_1$.
As the subgraph $G_1|_{\OO\setminus L}$ is not affected by the operations from $S_3$, we obtain that $(G_1/S_3)|_{\OO\cup\{v_{mid}\}} = G_1|_{\OO\cup\{v_{mid}\}}$.

To sum up, the subgraph $(G_1/v_{mid})|_\OO$ is the same as the subgraph $(G_1/S_3/v_{mid})|_\OO$.
As toggling the nodes $S_2[1\ldots]$ affects only the subgraph $(G_1/v_{mid})|_\OO$, we obtain that $S_3,S_2$ is a valid sequence for $G_1$ and the claim follows.
\end{proof}

Let $S_3'$ be the sequence $S_3$ when reaching line~\ref{li:check} and $S''$ be the sequence $S_3$ when reaching line~\ref{li:set_new_S}.

\begin{lemma}\label{le:properties}
Consider the graph $G_{1}$ before line~\ref{li:loop_add_nodes_to_s3} of Algorithm~\ref{alg:TBS_simplified}.
  The following properties hold:
\begin{enumerate}
\item $G_1/v_{mid}|_\OO=G_1/S_3''/v_{mid}|_\OO$.\label{it:prop_subgr_o}
 \item In $G_1/S_3''/v_{mid}$, the nodes from $S_2[2\ldots]$ do not have any neighbors in $U$.\label{it:prop_s2_not_on_u}
\item $G_1/S_3'|_U=G_1/S_3''/S_2|_U$. \label{it:prop_subgr_u}
\end{enumerate}
\end{lemma}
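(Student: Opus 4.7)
The plan is to reduce all three properties to two key structural facts established just earlier: Claim~\ref{cl:Claim2}, which tracks the parity-dependent effect of $S_3$ on the induced subgraph $G_1|_W$, and Claim~\ref{cl:neighborhood_of_vmid}, which equates $\NN^+(w_{last})$ with $\NN^+(v_{mid})$ in $G_1/S_3''$ under the even-parity assumption. A preliminary observation to pin down first is that $|S_3''|$ is always even: if the test on line~\ref{li:check} succeeds then $v_{mid}$ is white in $G_1/S_3'$, which by Claim~\ref{cl:Claim2} forces $|S_3'|$ odd and hence $|S_3''| = |S_3'| - 1$ even; otherwise $v_{mid}$ is black, $S_3'' = S_3'$, and Claim~\ref{cl:Claim2} again forces $|S_3''|$ even.

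For property~\ref{it:prop_subgr_o}, I would first establish that $(G_1/S_3'')|_\OO = G_1|_\OO$. Each $w_i \in S_3''$ is black at the moment of its toggle, so by Lemma~\ref{le:Lemma1}(1) $w_i \in \NN(v_{mid})$ and by Lemma~\ref{le:Lemma1}(2) $w_i$ is adjacent to every node of $L$; combined with Lemma~\ref{le:Lemma1}(3b), this yields $\NN(w_i)\cap \OO = L$, so $\toggle(w_i)$ complements only edges inside $L$ when restricted to $\OO$. Since $|S_3''|$ is even these complementations cancel. The same bookkeeping on the edges $(v_{mid},\ell)$ with $\ell\in L$ shows $\NN_{G_1/S_3''}(v_{mid})\cap \OO = L$, so $\toggle(v_{mid})$ acts identically on the $\OO$-slice in $G_1$ and in $G_1/S_3''$.

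For property~\ref{it:prop_s2_not_on_u}, I use Claim~\ref{cl:s2_has_only_o} to reduce to showing that no node of $\OO$ has a neighbor in $U$ in $G_1/S_3''/v_{mid}$. For $\OO \setminus L$, Lemma~\ref{le:Lemma1}(3b) gives no such edges already in $G_1/S_3''$, and since $\OO\setminus L$ lies outside $\NN^+(v_{mid})$ there, the final toggle leaves them alone. For $L$, Lemma~\ref{le:Lemma1}(2,3a) tells us that in $G_1/S_3''$ the $L$-$U$ edges are exactly the complete bipartite subgraph $L\times(\NN(v_{mid})\cap U)$, so $\toggle(v_{mid})$ flips this entire bipartite subgraph to the empty one; edges from $L$ to $U\setminus\NN(v_{mid})$ were already absent and remain so.

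For property~\ref{it:prop_subgr_u} I would split by the parity of $|S_3'|$. If $|S_3'|$ is even, then $S_3''=S_3'$, the exit condition of the loop combined with Lemma~\ref{le:Lemma1}(1) gives $\NN(v_{mid})\cap U = \emptyset$ in $G_1/S_3''$, so $\toggle(v_{mid})$ does not affect the $U$-subgraph. If $|S_3'|$ is odd, then $S_3'=S_3'',w_{last}$ and Claim~\ref{cl:neighborhood_of_vmid} gives $\NN^+(w_{last})=\NN^+(v_{mid})$ in $G_1/S_3''$, so $\toggle(w_{last})$ and $\toggle(v_{mid})$ induce the same complementation on the graph. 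In both cases $G_1/S_3'|_U = G_1/S_3''/v_{mid}|_U$. To close the chain, I would invoke property~\ref{it:prop_s2_not_on_u} and argue inductively along $S_2[2\ldots]$: each $x\in S_2[2\ldots]$ currently has $\NN(x)\cap U=\emptyset$, so its toggle complements no $U$-edge and also preserves $\NN(\cdot)\cap U=\emptyset$ for the remaining elements (since $U$ is disjoint from $\NN^+(x)$). Hence $G_1/S_3''/S_2|_U = G_1/S_3''/v_{mid}|_U$. The main obstacle I anticipate is this careful bookkeeping of edges between $L$ and $U$ through the parity case analysis, as this is where the interaction between the three invariants of Lemma~\ref{le:Lemma1} and the parity invariant from Claim~\ref{cl:Claim2} is most delicate.
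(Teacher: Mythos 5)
Your proof is correct and follows essentially the same route as the paper's (terse) argument, relying on Lemma~\ref{le:Lemma1}, Claim~\ref{cl:Claim2}, Claim~\ref{cl:neighborhood_of_vmid}, and Claim~\ref{cl:s2_has_only_o}. The one thing you do that the paper leaves implicit, and which makes your write-up cleaner, is to isolate up front the observation that $|S_3''|$ is always even (via Claim~\ref{cl:Claim2} and the line~\ref{li:check} test): this lets you replace the paper's parity-laden appeals to Claim~\ref{cl:neighborhood_of_vmid} in property~(\ref{it:prop_subgr_o}) with a direct cancellation argument showing $(G_1/S_3'')|_\OO=G_1|_\OO$.
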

\begin{proof}
From Lemma~\ref{le:Lemma1} every toggled node $v\in S_3''$ is incident to all nodes from $L$ and no nodes from $\OO\setminus L$. 
By Claims~\ref{cl:Claim2} and \ref{cl:neighborhood_of_vmid}, property (\ref{it:prop_subgr_o}) follows.

By Lemma~\ref{le:Lemma1}, in $G_1/S_3'$ there is no edge from $U$ to $\OO$, so by Claim~\ref{cl:neighborhood_of_vmid} this is also the case for $G_1/S_3''/v_{mid}$.
As every element of $S_2[2\ldots]$ is in $\OO$ by Claim~\ref{cl:s2_has_only_o}, property~(\ref{it:prop_s2_not_on_u}) follows. Finally, property~(\ref{it:prop_subgr_u}) follows immediately from property~(\ref{it:prop_s2_not_on_u}) and Claim~\ref{cl:neighborhood_of_vmid}.
\end{proof}

This immediately gives us that in line~\ref{li:set_new_S} of Algorithm~\ref{alg:TBS_simplified}, the sequence $S_3,v_{mid},S_2[2\ldots]$ is valid for $G_1$, so
$S_1,S_3,S_2$ is valid for $G$.
Thus by induction $S$ is valid for $G$ in line~\ref{li:main_loop} of Algorithm~\ref{alg:TBS_simplified}.
Further, we can show by induction that there is no black node in $G/S$.

\begin{lemma}
 There are no black nodes in $G/S$.
\end{lemma}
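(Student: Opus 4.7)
The plan is to proceed by induction on the number of completed iterations of the main while loop in line~\ref{li:main_loop} of Algorithm~\ref{alg:TBS_simplified}. The base case is immediate, because the initial ``warm-up'' while loop (just above line~\ref{li:main_loop}) only terminates once $G/S$ contains no black node. For the inductive step, I assume the invariant ``$G/S$ has no black node'' at the start of an iteration, write the current $S$ as $S_1, S_2$ according to the split of line~\ref{li:splitS}, and show that after the iteration --- when $S$ is replaced by $S_1, S_3'', S_2$, where $S_3''$ denotes the value of $S_3$ at line~\ref{li:set_new_S} --- the invariant still holds, i.e., that $G_1/S_3''/S_2$ contains no black node, where $G_1 = G/S_1$. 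Validity of the new sequence for $G$ is already in hand from the previous validity lemma, so the graph in question is well-defined.

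The heart of the induction step is a case analysis on which class a node of $G_1$ falls into. By Claim~\ref{cl:claim1} and the discussion immediately after, every node of $G_1$ is either $v_{mid}$, a member of $\OO$, a member of $U$, or a white isolated node of $G_1/v_{mid}$. For nodes in $U$, Lemma~\ref{le:properties}(\ref{it:prop_subgr_u}) gives $G_1/S_3''/S_2|_U = G_1/S_3'|_U$, and the inner while loop in lines~\ref{li:loop_add_nodes_to_s3}--\ref{li:loop_add_nodes_to_s3_end} exits precisely when $G_1/S_3'$ has no black node from $U$. For nodes in $\OO$, I combine Claim~\ref{cl:s2_has_only_o} (so that $S_2[2\ldots] \subseteq \OO$), Lemma~\ref{le:properties}(\ref{it:prop_subgr_o}) (so that the $\OO$-subgraph of $G_1/S_3''/v_{mid}$ coincides with that of $G_1/v_{mid}$), and Lemma~\ref{le:properties}(\ref{it:prop_s2_not_on_u}) (so that toggling $S_2[2\ldots]$ in $G_1/S_3''/v_{mid}$ has no effect outside $\OO$, exactly as in $G_1/v_{mid}$) to conclude $G_1/S_3''/S_2|_\OO = G_1/S_2|_\OO = G/S|_\OO$, which contains no black node by the inductive hypothesis. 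For $v_{mid}$: since every toggle in $S_3'$ flips the color of $v_{mid}$ (by Lemma~\ref{le:Lemma1}(1) each such node is a neighbor of $v_{mid}$ at the moment it is toggled) and the check in line~\ref{li:check} forces $|S_3''|$ to be even, $v_{mid}$ is black in $G_1/S_3''$, so the first toggle of $S_2$ turns it white and isolated, and none of $S_2[2\ldots] \subseteq \OO$ can affect it afterwards. For the remaining white isolated nodes, Lemma~\ref{le:Lemma1} confines the closed neighborhoods of all $S_3''$-toggles to $U \cup L \cup \{v_{mid}\}$ and of all $S_2$-toggles to $\OO \cup \{v_{mid}\}$, so these residual nodes are simply never touched and remain white.

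The main obstacle is the bookkeeping that justifies why inserting $S_3''$ between $S_1$ and $S_2$ leaves the net effect on $\OO$ unchanged. Each $S_3''$-toggle induces a local complementation of the subgraph on $W = L \cup \{v_{mid}\}$ by Claim~\ref{cl:Claim2}, and for these complementations to cancel upon applying $\toggle(v_{mid})$ it is essential that $|S_3''|$ be even --- which is exactly what the parity-sensitive check in line~\ref{li:check} (together with Claim~\ref{cl:neighborhood_of_vmid}) enforces. Once this cancellation is established, Lemma~\ref{le:properties} packages everything neatly and the induction step reduces to the routine case analysis above.
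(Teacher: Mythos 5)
Your proof is correct and follows essentially the same route as the paper: induction on iterations of the main loop, using Lemma~\ref{le:properties}(\ref{it:prop_subgr_u}) for nodes in $U$, and Lemma~\ref{le:properties}(\ref{it:prop_subgr_o}) together with Claim~\ref{cl:s2_has_only_o} to reduce the $\OO$ case to the inductive hypothesis via $G_1/S_3''/S_2|_\OO = G/S|_\OO$. Your treatment of $v_{mid}$ via the parity of $|S_3''|$ and the explicit dismissal of the residual white isolated nodes are slightly more spelled-out than the paper (which simply observes that $v_{mid}$ is toggled as $S_2[1]$, hence white), but the substance is identical.
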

\begin{proof}
The proof is by induction on the number of iterations of the main loop in lines~\ref{li:main_loop}-\ref{li:set_new_S} of Algorithm~\ref{alg:TBS_simplified}.
Before line~\ref{li:main_loop} the claim is true because we have kept toggling black nodes as long as possible. 

Consider a single iteration of the main loop and graph $G_1$ before line~\ref{li:loop_add_nodes_to_s3}.
Let $G_2=G_1/S_3''/S_2$.
After line~\ref{li:loop_add_nodes_to_s3_end}, there is no black node from $U$ in $G_1/S_3'$, so by Lemma~\ref{le:properties}(\ref{it:prop_subgr_u}), there is no black node from $U$ in $G_2$.
Clearly $v_{mid}$ is white in $G_2$, because $v_{mid}=S_2[1]$ is toggled.
Now we need to show that there is no black node from $\OO$ in $G_2$.
By Lemma~\ref{le:properties}(\ref{it:prop_subgr_o}) and the fact that $S_2[2\ldots]$ contains only nodes from~$\OO$ (Claim~\ref{cl:s2_has_only_o}) we have $G_2|_\OO=G_1/S_3''/v_{mid}/S_2[2\ldots]|_\OO=G_1/v_{mid}/S_2[2\ldots]|_\OO=G_1/S_2|_\OO=G/S|_\OO$ which by induction has no black node.
Hence $G_2$ has no black node in $\OO$ and the claim follows.
\end{proof}

This concludes the proof of the correctness of Algorithm~\ref{alg:TBS_simplified}.
The bottleneck of Algorithm~\ref{alg:TBS_simplified} is that in every iteration of the loop in lines \ref{li:main_loop}-\ref{li:set_new_S}, we need to find the split $S=S_1,S_2$. To implement Algorithm~\ref{alg:TBS_simplified} efficiently, we first establish that
the sets $U$ and the suffixes $S_2$ are monotone in subsequent iterations of the algorithm.
Let $S^{(j)}$ and $U^{(j)}$ denote the sequence $S$ and the set $U$ at the beginning of the $j$-th iteration of the main loop of Algorithm~\ref{alg:TBS_simplified}.
We first show the following:

\begin{lemma}\label{le:monotonicity_of_u}
$U^{(j+1)}=U^{(j)}\setminus\{v: v$ became white and isolated while toggling nodes in line~\ref{li:toggle_v} of $j$-th iteration of the main loop of Algorithm~\ref{alg:TBS_simplified}$\}$. 
\end{lemma}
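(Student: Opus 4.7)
The plan is to compare $G/S^{(j+1)} = G_1/S_3''/S_2$ with $G/S^{(j)} = G_1/S_2$, where $G_1 = G/S_1$, $v_{mid} = S_2[1]$, $S_3'$ denotes the sequence toggled during the inner loop at line~\ref{li:toggle_v}, and $S_3''$ denotes that sequence after the possible undo of line~\ref{li:remove_from_s3}. By Claim~\ref{cl:claim1}, $U^{(j)}$ equals the node set of the non-singleton all-white components of $G_1/v_{mid}$. Write $W$ for the set of nodes that became white and isolated during line~\ref{li:toggle_v}. The first observation is that once a node becomes isolated, no subsequent toggle in the loop (which only affects edges inside $\NN^+(w)$ for some $w$ with incident edges) can reattach it, so $W \cap U^{(j)}$ coincides with the set of nodes of $U^{(j)}$ that are isolated in $G_1/S_3'$; nodes of $W$ outside $U^{(j)}$ do not affect the statement.

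I would then show that for $v \in U^{(j)}$, $v \in U^{(j+1)}$ if and only if $v$ has a neighbor inside $U^{(j)}$ in $G_1/S_3'$, i.e., if and only if $v \notin W$. For this, I need three facts about $G/S^{(j+1)}$: (i) the subgraph induced by $U^{(j)}$ equals that of $G_1/S_3'$, which is exactly Lemma~\ref{le:properties}(\ref{it:prop_subgr_u}); (ii) all nodes of $U^{(j)}$ are white there, which follows from the already-established fact that $G/S$ has no black nodes; and (iii) $U^{(j)}$ has no edges to its complement. For (iii), in $G_1/S_3'$ this follows from Lemma~\ref{le:Lemma1}(1) (loop termination with no black node remaining in $U^{(j)}$ means $\NN(v_{mid}) \cap U^{(j)} = \emptyset$) combined with Lemma~\ref{le:Lemma1}(3a). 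To transfer this to $G/S^{(j+1)}$, I use Claim~\ref{cl:neighborhood_of_vmid}: in $G_1/S_3''$, either $v_{mid}$ has no neighbors in $U^{(j)}$ (so $\toggle(v_{mid})$ leaves $U^{(j)}$-edges untouched), or $\NN^+(v_{mid}) = \NN^+(w_{last})$ there (so $\toggle(v_{mid})$ acts on $U^{(j)}$-edges exactly as the undone $\toggle(w_{last})$ would). In both cases the cumulative effect on $U^{(j)}$-edges matches $S_3'$; then Lemma~\ref{le:properties}(\ref{it:prop_s2_not_on_u}) together with Claim~\ref{cl:s2_has_only_o} guarantees that the remaining toggles $S_2[2\ldots]$ touch neither $U^{(j)}$-edges nor $U^{(j)}$'s connections to the outside.

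Finally, I verify that no node outside $U^{(j)}$ enters $U^{(j+1)}$. For nodes of $\OO$: Lemma~\ref{le:properties}(\ref{it:prop_subgr_o}) identifies the $\OO$-subgraphs of $G/S^{(j)}$ and $G/S^{(j+1)}$, and an argument symmetric to the one above (using Lemma~\ref{le:Lemma1}(3b) and the same case split on $\toggle(v_{mid})$) shows $\OO$ is closed in $G/S^{(j+1)}$; hence the $\OO$-components carry over unchanged from $G/S^{(j)}$, where by construction they are not non-singleton all-white. The node $v_{mid}$ is toggled inside $S_2$ and is thus white and isolated in both graphs; any other isolated white node of $G_1/v_{mid}$ is not in $\NN(v_{mid})$ in $G_1$ (since $\NN(v_{mid}) \subseteq L \cup (U \cap \NN(v_{mid}))$), and therefore was already isolated in $G_1$ and stays so under every subsequent toggle. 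The main obstacle is the case split for closure of $U^{(j)}$ under $\toggle(v_{mid})$ in $G_1/S_3''$: making rigorous that undoing $w_{last}$ and then toggling $v_{mid}$ has the same effect on $U^{(j)}$-edges as simply applying $S_3'$ requires careful use of Claim~\ref{cl:neighborhood_of_vmid}, but the identity $\NN^+(v_{mid}) = \NN^+(w_{last})$ in $G_1/S_3''$ makes this essentially automatic.
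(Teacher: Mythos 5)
Your proof follows essentially the same route as the paper's, which is also built on Lemma~\ref{le:properties}(\ref{it:prop_subgr_o},\ref{it:prop_s2_not_on_u},\ref{it:prop_subgr_u}), Claim~\ref{cl:neighborhood_of_vmid}, and Claim~\ref{cl:s2_has_only_o}: one shows that $G/S^{(j)}$ and $G/S^{(j+1)}$ coincide on colors and edges outside $U^{(j)}$, observes that once a node becomes white and isolated it stays that way, and concludes. The paper states this rather tersely; you supply the missing details and make explicit the transfer-of-closure argument via the dichotomy in Claim~\ref{cl:neighborhood_of_vmid}, which is in the spirit of what the paper leaves implicit.

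One point worth flagging in your final paragraph: the inclusion $\NN(v_{mid}) \subseteq L \cup (U \cap \NN(v_{mid}))$, i.e.\ $\NN_{G_1}(v_{mid}) \subseteq \OO \cup U$, is asserted without proof and does not follow directly from Claim~\ref{cl:claim1} or Lemma~\ref{le:Lemma1}. Concretely, a node $x$ that is black in $G_1$ with $\NN^+_{G_1}(x) = \NN^+_{G_1}(v_{mid})$ would be white and isolated in $G_1/v_{mid}$ (hence outside $\OO \cup U$) while still being adjacent to $v_{mid}$ in $G_1$; your step ``therefore was already isolated in $G_1$'' then fails for such an $x$. This is a genuine gap in your write-up, although the paper's own one-sentence justification of the ``$G/S^{(j)}$ and $G/S^{(j+1)}$ agree outside $U^{(j)}$'' claim is equally silent about these nodes. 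To close it, you would need a separate argument that such an $x$, even if affected by the toggles in $S_3''$, ends up white and isolated in $G/S^{(j+1)}$ (which one can check it does, since $x$ tracks $v_{mid}$ up to adjacency and both are toggled-out by the time $S_2$ finishes), rather than appealing to $x$ being untouched from the start.
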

\begin{proof}
By Lemma~\ref{le:properties}(\ref{it:prop_subgr_o},\ref{it:prop_s2_not_on_u}) and Claim~\ref{cl:neighborhood_of_vmid}, graphs $G/S^{(j)}$ and $G/S^{(j+1)}$ differ only on some nodes from $U^{(j)}$, that is only nodes from $U^{(j)}$ can have different colors and both graphs have exactly the same edges except possibly of the edges that have both endpoints in $U^{(j)}$.
Consider a node $v\in U^{(j)}$ toggled in line~\ref{li:toggle_v}.
All nodes that became white and isolated in $G_1$ while toggling $v$ are white and isolated in $G/S^{(j+1)}$, so they do not belong to $U^{(j+1)}$ and the claim follows.
\end{proof}

Similarly, let $S_{2}^{(j)}$ denote the sequence $S_{2}$ in the $j$-th iteration of the main loop of Algorithm~\ref{alg:TBS_simplified}. We show:

\begin{lemma}\label{le:monotonicity_of_s}
$S_2^{(j)}$ is a (not necessarily proper) suffix of~$S_2^{(j+1)}$.
\end{lemma}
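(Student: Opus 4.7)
The plan is to show that $|S_1^{(j+1)}| \leq |S_1^{(j)}| + |S_3''|$, where $S_3''$ denotes the value of $S_3$ at line~\ref{li:set_new_S}; since by construction $S^{(j+1)} = S_1^{(j)}, S_3'', S_2^{(j)}$, this inequality is equivalent to the desired claim that $S_2^{(j)}$ is a suffix of $S_2^{(j+1)}$. By the maximality condition in line~\ref{li:splitS} and the containment $U^{(j+1)} \subseteq U^{(j)}$ from Lemma~\ref{le:monotonicity_of_u}, it suffices to verify: for every $l \geq 1$, every node of $U^{(j)}$ lies in an all-white connected component of $G_1/S_3''/S_2^{(j)}[1\ldots l]$.

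The first key step will be to describe $G_1/S_3''/v_{mid}$ restricted to $U^{(j)}$, where $v_{mid} = S_2^{(j)}[1]$. Applying Lemma~\ref{le:Lemma1} to the valid sequence $S_3''$ starting from $G_1$, properties (3a) and (3b) show that in $G_1/S_3''$ the only edges between $U^{(j)}$ and its complement connect $\NN(v_{mid}) \cap U^{(j)}$ to $L \cup \{v_{mid}\}$; property (2) further ensures that the biclique between $\NN(v_{mid}) \cap U^{(j)}$ and $L$ is complete. Crucially, all these crossing edges lie in the subgraph induced by $\NN^+(v_{mid}) = \{v_{mid}\} \cup L \cup (\NN(v_{mid}) \cap U^{(j)})$, so $\toggle(v_{mid})$ complements and therefore erases all of them at once. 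Simultaneously, by property (1) every node of $\NN(v_{mid}) \cap U^{(j)}$ was black and flips to white, while $U^{(j)} \setminus \NN(v_{mid})$ was already white and is not touched. Hence in $G_1/S_3''/v_{mid}$ the whole of $U^{(j)}$ is white and disconnected from the rest of the graph.

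The second step will lift the picture from $l=1$ to every $l \geq 1$. By Claim~\ref{cl:s2_has_only_o} every element of $S_2^{(j)}[2\ldots]$ belongs to $\OO$, which is disjoint from $U^{(j)}$ and, after $\toggle(v_{mid})$, has no neighbors inside $U^{(j)}$. A straightforward induction on the prefix length shows that each subsequent toggle $\toggle(u)$ with $u \in \OO$ has $\NN(u)$ entirely contained in $\OO$, so it neither changes any node of $U^{(j)}$ nor creates a new edge to $U^{(j)}$. Consequently, for every $l \geq 1$, the connected components of nodes in $U^{(j)}$ in $G_1/S_3''/S_2^{(j)}[1\ldots l]$ remain subsets of $U^{(j)}$ and are entirely white, which is the required statement.

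The main obstacle will be the first step: one must combine parts (1), (2), (3a), and (3b) of Lemma~\ref{le:Lemma1} in just the right way to see that $\toggle(v_{mid})$ \emph{erases} the boundary edges of $U^{(j)}$ rather than only reshuffling them, and that the coloring inside $U^{(j)}$ simultaneously becomes uniformly white. Once this structural observation is in place, the inductive argument of the second step and the final conclusion are routine bookkeeping, entirely consistent with the style of the proofs of Lemma~\ref{le:properties} and Lemma~\ref{le:monotonicity_of_u}.
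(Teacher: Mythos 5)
Your overall strategy matches the paper's: reduce the claim to showing that for every $l\geq 1$, every node of $U^{(j)}$ lies in an all-white component of $G_1/S_3''/S_2^{(j)}[1\ldots l]$, handle $l=1$ via the structure guaranteed by Lemma~\ref{le:Lemma1}, and then induct on $l$ using Claim~\ref{cl:s2_has_only_o}. The reduction itself and the inductive second step are correct.

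The gap is in the first step, at the point you yourself identify as ``the main obstacle.'' You assert that $\NN^+(v_{mid}) = \{v_{mid}\}\cup L \cup (\NN(v_{mid})\cap U^{(j)})$ in $G_1/S_3''$, which you need in order for all crossing edges to lie inside $\NN^+(v_{mid})$ and hence be erased by $\toggle(v_{mid})$. Lemma~\ref{le:Lemma1}(3b) gives $\NN(v_{mid})\cap \OO \subseteq L$, but you also need the \emph{reverse} inclusion $L\subseteq \NN(v_{mid})$; otherwise edges from $\NN(v_{mid})\cap U^{(j)}$ to $L\setminus\NN(v_{mid})$, which exist by (2), are not touched by $\toggle(v_{mid})$ and survive as crossing edges. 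This inclusion does not follow from Lemma~\ref{le:Lemma1} alone: each toggle of a node $w\in S_3$ is adjacent to $v_{mid}$ and to all of $L$, so it flips every edge $\{v_{mid},l\}$ for $l\in L$, and consequently $L\subseteq\NN(v_{mid})$ holds in $G_1/S_3''$ if and only if $|S_3''|$ is even. The evenness is exactly what line~\ref{li:check} enforces (it is equivalent to $v_{mid}$ being black in $G_1/S_3''$), and the resulting invariant on $(G_1/S_3'')|_W$ is Claim~\ref{cl:Claim2}. The paper packages precisely this parity reasoning into Claim~\ref{cl:neighborhood_of_vmid} and invokes it directly; your proposal silently assumes its conclusion. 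The fix is local — cite Claim~\ref{cl:Claim2} together with the evenness of $|S_3''|$ enforced by line~\ref{li:check}, or invoke Claim~\ref{cl:neighborhood_of_vmid} as the paper does — but as written the first step does not close.
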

\begin{proof}
Consider the $j$-th iteration of the main while loop and let $S_3'$ be the sequence $S_3$ when reaching line~\ref{li:check}, $S_3''$ be the sequence $S_3$ when reaching line~\ref{li:set_new_S}.
Let $U^{(j)}$ and $U^{(j+1)}$ be defined as in Lemma~\ref{le:monotonicity_of_u}.

By the condition in line~\ref{li:loop_add_nodes_to_s3}, there are no black nodes from $U^{(j)}$ in $G_1/S_3'$,
and by Lemma~\ref{le:Lemma1} there are no edges between the nodes of $U \setminus \NN (v_{mid})$ and the nodes outside $U^{(j)}$.
Then, by Claim~\ref{cl:neighborhood_of_vmid}, there are no black nodes from $U^{(j)}$ in $G_1/S_3''/v_{mid}$,
and further there are no edges between the nodes of $U^{(j)}$ and the nodes outside $U^{(j)}$ there.
Thus, no further toggles can affect the nodes from $U^{(j)}$.
We conclude that, at the end of the $j$-th iteration, for every $i=2,3,\ldots$, there are no black nodes in the connected component
containing the nodes from $U^{(j)}$ in $G_1/S_3''/v_{mid}/S_{2}^{(j)}[2\ldots i]$.
However, by the previous lemma $U^{(j+1)}\subseteq U^{(j)}$, so this in particular holds for the nodes from $U^{(j+1)}$.
Hence, the split in the $j$-th iteration must be before $v_{mid},S_{2}^{(j)}[2\ldots i]=S_{2}^{(j)}$.
\end{proof}

This allows us to keep track of sequence $S$ split into $S_1$ and $S_2$ and either move elements from the end of $S_1$ to the beginning of $S_2$ or add nodes at the end of $S_1$.
With this observation, we transform Algorithm~\ref{alg:TBS_simplified} to Algorithm~\ref{alg:TBS_faster}.
Its correctness follows from the correspondence between appropriate variables in subsequent iterations of the main while loops:
%Let $G_{orig}$ be the initial graph provided to the algorithm.
% The correctness of Algorithm~\ref{alg:TBS_faster} follows from the following correspondence between the values in properties: 

\begin{lemma}
 After the loop in lines~\ref{li:loop_undo_moves}-\ref{li:loop_undo_moves_end} in the $j$-th iteration of the main loop of Algorithm~\ref{alg:TBS_faster} and after lines~\ref{li:define_U}-\ref{li:set_s3_empty} in the $j$-th iteration of the main loop of Algorithm~\ref{alg:TBS_simplified} the following are equal:
 \begin{itemize}
  \item sequence $S_1$ in Algorithm~\ref{alg:TBS_faster} and sequence $S_1$ in Algorithm~\ref{alg:TBS_simplified},
  \item sequence $S_2$ in Algorithm~\ref{alg:TBS_faster} and sequence $S_2$ in Algorithm~\ref{alg:TBS_simplified},
  \item graph $G$ in Algorithm~\ref{alg:TBS_faster} and graph $G_1$ in Algorithm~\ref{alg:TBS_simplified},
  \item set $V$ in Algorithm~\ref{alg:TBS_faster} and set $U$ in Algorithm~\ref{alg:TBS_simplified},
 \end{itemize}

After the lines~\ref{li:toggle_black_nodes_loop}-\ref{li:undo_last_move} in the $j$-th iteration of the main loop of Algorithm~\ref{alg:TBS_faster} and after lines~\ref{li:loop_add_nodes_to_s3}-\ref{li:undo_toggle_w} in the $j$-th iteration of the main loop of Algorithm~\ref{alg:TBS_simplified} the following are equal:
  \begin{itemize}
  \item sequence $S_1$ in Algorithm~\ref{alg:TBS_faster} and sequence $S_1,S_3$ in Algorithm~\ref{alg:TBS_simplified},
  \item sequence $S_2$ in Algorithm~\ref{alg:TBS_faster} and sequence $S_2$ in Algorithm~\ref{alg:TBS_simplified},
  \item graph $G$ in Algorithm~\ref{alg:TBS_faster} and graph $G_1$ in Algorithm~\ref{alg:TBS_simplified},
  \item set $V$ in Algorithm~\ref{alg:TBS_faster} and set $U$ in Algorithm~\ref{alg:TBS_simplified},
 \end{itemize} 
\end{lemma}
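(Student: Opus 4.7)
The plan is to prove both correspondences simultaneously by induction on the iteration index $j$ of the main loops of the two algorithms. The base case ($j=1$) follows because both algorithms are initialized identically: $S_{1}=S_{2}=()$, the maintained graph equals $G$ in both cases, and the sets $V$ (in Algorithm~\ref{alg:TBS_faster}) and $U$ (in Algorithm~\ref{alg:TBS_simplified}) coincide by inspection of their initialisations and Claim~\ref{cl:claim1} together with the initial ``toggle black nodes while possible'' phase that precedes the main loop.

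For the inductive step, assume the invariants hold at the end of iteration $j-1$ of both algorithms. I would first establish the first correspondence (after the undoing loop in Algorithm~\ref{alg:TBS_faster} and after lines~\ref{li:define_U}-\ref{li:set_s3_empty} in Algorithm~\ref{alg:TBS_simplified}). In the faster algorithm, $S_{1}$ and $S_{2}$ are carried over from the previous iteration and modified only by the undoing loop, which pops elements from the end of $S_{1}$ and prepends them to $S_{2}$. In the slower algorithm, the split is recomputed from scratch at line~\ref{li:splitS}. The key point is that by Lemma~\ref{le:monotonicity_of_s}, the $S_{2}$ produced in the $j$-th iteration of Algorithm~\ref{alg:TBS_simplified} is a suffix of the $S_{2}$ that will be produced in the $(j+1)$-st iteration; equivalently, the split point only moves leftward. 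Hence starting from the split achieved at the end of iteration $j-1$ (which by induction matches in both algorithms), undoing exactly the right number of moves from the end of $S_{1}$ lands Algorithm~\ref{alg:TBS_faster} at precisely the split that Algorithm~\ref{alg:TBS_simplified} recomputes. The termination condition of the undoing loop in the faster algorithm must be shown to coincide with the maximality criterion in line~\ref{li:splitS}, which follows because both are equivalent to the statement ``some node of $U$ lies in a non-all-white component of the current graph''. The graph equality and $V=U$ then follow, the former because reversing the toggles corresponds exactly to reverting to $G/S_{1}$, and the latter using Lemma~\ref{le:monotonicity_of_u}.

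Next I would establish the second correspondence (after lines~\ref{li:toggle_black_nodes_loop}-\ref{li:undo_last_move} in Algorithm~\ref{alg:TBS_faster} and after lines~\ref{li:loop_add_nodes_to_s3}-\ref{li:undo_toggle_w} in Algorithm~\ref{alg:TBS_simplified}). Both algorithms now toggle black nodes from $V=U$ while there are such nodes, in one case appending to $S_{1}$ and in the other to $S_{3}$. Since the first correspondence ensures the two algorithms see the same graph and the same candidate set of nodes at each step, and since both choose any black node from $V=U$ (we may assume the same deterministic tiebreaking rule), the sequences produced are identical; thus the $S_{1}$ of the faster algorithm equals $S_{1},S_{3}$ of the slower one. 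The conditional undo steps (the check of whether $S_{2}[1]$ is white, followed by removing the last element) are identical in both algorithms by the invariant on the graphs, so they execute in lockstep. The graph and $V=U$ invariants are preserved because the same toggle and same undo are performed in both.

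The main obstacle, I expect, is verifying that the termination condition of the undoing loop in Algorithm~\ref{alg:TBS_faster} is equivalent to the maximality condition in line~\ref{li:splitS} of Algorithm~\ref{alg:TBS_simplified}; this is exactly where Lemmas~\ref{le:monotonicity_of_u} and~\ref{le:monotonicity_of_s} are essential, since they justify that an incremental (monotone) implementation computes the same split as the from-scratch definition. Once this is settled, the rest of the induction is a routine bookkeeping argument tracking the four quantities through the two loops.
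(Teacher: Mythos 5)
Your proposal is correct and follows essentially the same route as the paper: induction on the iteration index, with the $S_1$/$S_2$ correspondence secured by Lemma~\ref{le:monotonicity_of_s}, the $V$/$U$ correspondence by Lemma~\ref{le:monotonicity_of_u}, and the graph equality following trivially from performing and undoing the same toggles. You spell out somewhat more explicitly than the paper why the undoing loop's stopping condition coincides with the from-scratch maximality criterion in line~\ref{li:splitS}, but this is precisely the content of the monotonicity lemmas the paper also invokes.
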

\begin{proof}
We proceed by induction on the number of iterations.
The invariants on $S_{1},S_{2},S_{3}$ hold by the monotonicity shown in Lemma~\ref{le:monotonicity_of_s}.
The invariant on $G$ and $G_{1}$ holds trivially, because we always do or undo the corresponding toggles.
Finally, to relate $V$ and $U$ we observe that the invariant initially holds, because we have removed from $V$
all nodes that became isolated and white in lines~\ref{li:greedily_extend}-\ref{li:greedily_extend_end} of Algorithm~\ref{alg:TBS_faster},
and there are no black nodes in $G/S_{1}$ by condition of the while loop. 
Then, recall that we have already characterised how the set $U$ changes in
subsequent iterations in Lemma~\ref{le:monotonicity_of_u}, and this is exactly how $V$ is being updated.
\end{proof}

\begin{algorithm}[h]
\begin{algorithmic}[1]
\Function{Process}{graph $G$ with no non-singleton all-white connected components}
\State $S_1,S_2:=()$
\State $V:=$ the set of all nodes of $G$ that are non-isolated or black

\Statex
\Function{\ToggleBlackNode}{}()
\State  $v:=$ a black node from $V$
\State  $\toggle(v)$
\State  remove from $V$ all nodes that became isolated and white (in particular: node $v$)
\State  $S_1:= S_1,v$
\EndFunction
\Statex
\Function{\UndoLastMove}{}()
\State  $w:=$ last element of $S_1$
\State  remove last element (that is: $w$) from $S_1$
\State  undo $\toggle(w)$
\State  \Return w
\EndFunction 
\Statex
\While{there is a black node in V}\label{li:greedily_extend}
\State \textsc{\ToggleBlackNode}()\label{li:greedily_extend_end}
\EndWhile
\While{$V$ is non-empty}\label{li:final_loop_start}
\While{there is no black node in $V$}\label{li:loop_undo_moves}
\State   $w:=$ \textsc{\UndoLastMove}()
\State   $S_2:= w,S_2$\label{li:loop_undo_moves_end}
\EndWhile
\While{there is a black node in $V$}:\label{li:toggle_black_nodes_loop}
\State   \textsc{\ToggleBlackNode}()\label{li:toggle_black_node}
\EndWhile
\If{$S_2[1]$ is white}
\State \textsc{\UndoLastMove}() \label{li:undo_last_move}
\EndIf 
\EndWhile
\State \Return $S_1,S_2$
\EndFunction
\end{algorithmic}
\caption{}
\label{alg:TBS_faster}
\end{algorithm}

Finally, rearranging the order of while loops and checks gives us exactly Algorithm~\ref{alg:TBS_orig} from~\cite{TannierBS07}.

\end{document}